\documentclass[12pt]{article}

\usepackage[margin=1in]{geometry}

\usepackage{setspace}

\usepackage{graphicx}

\usepackage{enumitem}

\usepackage{microtype}
\usepackage{subfigure}
\usepackage{booktabs}
\usepackage{comment}
\usepackage{wrapfig}
\usepackage{arydshln}
\usepackage{enumitem}
\usepackage{multirow}
\usepackage{url}
\usepackage{natbib}
\usepackage{authblk}

\RequirePackage[colorlinks,citecolor=blue,linkcolor=blue,urlcolor=blue,pagebackref]{hyperref}

\usepackage{amsmath}
\usepackage{amssymb}
\usepackage{mathtools}
\usepackage{amsfonts}
\usepackage{bm}
\usepackage{bbm}

\usepackage{algorithm}
\usepackage{algorithmic}

\def\1{\bm{1}}

\DeclareMathAlphabet{\mathsfit}{\encodingdefault}{\sfdefault}{m}{sl}
\SetMathAlphabet{\mathsfit}{bold}{\encodingdefault}{\sfdefault}{bx}{n}

\def\calA{{\mathcal{A}}}

\def\calK{{\mathcal{K}}}

\def\calX{{\mathcal{X}}}
\def\calY{{\mathcal{Y}}}

\def\bbE{{\mathbb{E}}}

\def\bbR{{\mathbb{R}}}

\DeclareMathOperator*{\argmax}{arg\,max}

\newcommand{\p}[1]{\left(#1\right)}
\newcommand{\sqb}[1]{\left[#1\right]}
\newcommand{\cb}[1]{\left\{#1\right\}}

\usepackage{amsthm}

\theoremstyle{plain}

\newtheorem{theorem}{Theorem}[section]

\newtheorem{proposition}[theorem]{Proposition}

\newtheorem*{remark}{Remark}

\usepackage[textsize=tiny]{todonotes}
\usepackage{multirow}
\usepackage{wrapfig}
\usepackage{subfigure}

\usepackage{xcolor}
\newcount\Comments  
\newcommand{\kibitz}[2]{\ifnum\Comments=1\textcolor{#1}{#2}\fi}

\usepackage{listings}

\usepackage{tabularx}

\usepackage[capitalize,noabbrev]{cleveref}

\allowdisplaybreaks

\title{Vector Search As Nearest Neighbor Matching: RAG-based Policy Learning in Causal Inference}

\author{Masahiro Kato\thanks{Email: \texttt{mkato-csecon@g.ecc.u-tokyo.ac.jp}}$\,$}
\author{Taka Kato\thanks{Email: \texttt{taka@np-hard.co.jp}}$\,$}

\affil{${}^*$The University of Tokyo.}
\affil{${}^\dagger$NP-hard Inc.}

\date{\today}

\begin{document}

\maketitle

\begin{abstract}
We propose one-step and two-step methods for policy learning with retrieval-augmented generation (RAG). We formulate RAG-based action selection under the potential outcome framework. In the two-step method, vector search retrieves action-specific neighboring evidence in an embedding space, the generator estimates conditional expected outcomes or their contrasts, and a plug-in rule selects an action. This formulation connects action-specific vector search with nearest-neighbor matching in causal inference. We decompose the regret of the two-step method into candidate-generation regret and within-candidate choice regret, and we bound the latter using prediction-error guarantees for nearest-neighbor estimators and transformers. We evaluate the one-step method directly as a policy because its intermediate computation is unobserved.
\end{abstract}

\noindent\textbf{Keywords:} causal inference, policy learning, retrieval-augmented generation, vector search, nearest-neighbor matching, plug-in classifier, conditional average treatment effect, minimax rate, in-context learning.

\section{Introduction}
Retrieval-augmented generation (RAG) has become a common architecture for action selection in systems built on language models.
In a RAG algorithm, a query is embedded in a vector space, and documents or other information are retrieved according to distances in that space. Conditional on the retrieved evidence, a generative model produces an answer, which may include a recommendation or a tool choice.

Despite its widespread use, the decision-theoretic guarantees of RAG have not been fully studied.
If a query is treated as a covariate and the action recommended in the generated answer is treated as the decision, RAG-based decision-making can be viewed as a policy learning problem. Action selection and policy learning have long been studied in causal inference and reinforcement learning.

In causal inference, counterfactual decision-making has been studied as a policy learning problem \citep{Athey2021policylearning}. The goal is to recommend the action with the highest conditional expected outcome, so estimation of the conditional expected outcome is central. Two main approaches are the plug-in approach and the counterfactual risk minimization (CRM) approach \citep{Swaminathan2015counterfactualrisk}; the latter is also called empirical welfare maximization \citep{Kitagawa2018whoshould}. The plug-in approach first estimates the conditional expected outcome for each action and then chooses the action with the largest estimate. By contrast, the CRM approach estimates a value functional and learns a policy by maximizing the estimated value.

\subsection{Contents and Contributions}
We formulate RAG-based action selection under the potential outcome framework, also known as the Neyman--Rubin causal model. The explicit two-step procedure is a plug-in policy rule because it estimates action-specific conditional expected outcomes before choosing an action. We evaluate the one-step output as a policy without imposing the same interpretation on its unobserved internal computation.

We first consider the case in which $K$ actions are given. Let $X$ denote covariates that represent the query or decision context supplied to the RAG system, let $A\in\calK\coloneqq\cb{0,1,\dots,K-1}$ denote the action, and let $Y(a)$ denote the potential outcome under action $a$. Our goal is to choose the action with the highest conditional expected outcome, and we write the context-specific optimal action as
\[a^*(x)\in\argmax_{a\in\calK}\bbE\sqb{Y(a)\mid X=x}.\]
Because vector search retrieves documents or examples that are similar or relevant to the query, we interpret the RAG system as producing an estimate of $a^*(x)$ from the context $x$. When the retrieval database contains covariates, actions, and outcomes, vector search supplies nearest neighbors used to estimate the conditional expected outcomes that determine $a^*(x)$. Similarity is therefore used to select matched evidence for each action, rather than to choose the action that appears most often among neighboring cases. This operation corresponds to nearest-neighbor matching in causal inference.

We next extend this formulation to settings in which the available
action set is not specified in advance. Although the preceding formulation assumes that $K$ actions are given, the set of available actions is often not known in advance in applications of language models. For example, in workplace applications, a user may submit a query and ask what to do next without providing a candidate action set. To reflect this setting, we propose two forms of RAG-based policy learning. The one-step method receives a query and directly returns a recommended action, while the actions considered internally remain unobserved. The two-step method proceeds as follows:
\begin{itemize}
    \item We provide a query and ask the RAG system to generate a finite set of candidate actions.
    \item For each candidate action, the RAG system retrieves matched evidence and evaluates or ranks the action according to its estimated conditional expected outcome.
    \item We select the action with the largest estimated conditional expected outcome or the highest rank.
\end{itemize}
The two-step method separates action generation from action choice. This separation allows us to decompose its regret into the loss from the generated candidate set and the loss from expected-outcome estimation or ranking within that set.

This study makes three contributions:
\begin{itemize}
    \item We formulate RAG-based action selection as policy learning under the potential outcome framework and give an explicit plug-in interpretation for the two-step procedure.
    \item We interpret action-specific vector search over observed cases as nearest-neighbor matching and distinguish evidence retrieval from action choice.
    \item We decompose regret into candidate-set regret and within-candidate regret, and we bound the latter using prediction-error guarantees.
\end{itemize}

\subsection{Related Work}
This study is related to three streams of literature: policy learning, RAG, and nonparametric analysis of neural networks and transformers.

\paragraph{Policy learning.}
We study policy learning in causal inference \citep{Swaminathan2015counterfactualrisk,Kitagawa2018whoshould,Athey2021policylearning}. As in supervised learning \citep{Audibert2011fast}, policy learning can be approached through plug-in estimation or counterfactual risk minimization. The latter is also referred to as empirical welfare maximization (EWM) \citep{Kitagawa2018whoshould}, and a theoretical comparison of the two approaches is given by \citet{Kitagawa2018whoshould}. Although that comparison gives favorable results for EWM relative to the plug-in approach under its conditions, their relative performance depends on the underlying data-generating process, as discussed by \citet{Audibert2011fast}. Because the structure of RAG is more directly compatible with plug-in estimation, we formulate RAG-based policy learning using the plug-in approach.

In particular, we interpret action-specific retrieval over observed cases as a nearest-neighbor matching procedure. Nearest-neighbor match counts have density-ratio limits that are proportional to inverse propensity scores, and nearest-neighbor matching can also be related to Riesz regression \citep{Lin2023estimationbased,Kato2025nearestneighbor}. The analyses of \citet{Kitagawa2018whoshould} and \citet{Athey2021policylearning} use inverse probability weighting (IPW) or augmented IPW estimators. The representation results in \citet{Lin2023estimationbased} and \citet{Kato2025nearestneighbor} therefore connect RAG-based policy learning to those analyses.

\paragraph{RAG.}
Our study is also related to work on RAG. \citet{Guu2020realm} jointly trains a latent document retriever and a language model. \citet{Lewis2020retrievalaugmentedgeneration} combines a dense passage index with a sequence-to-sequence generator and marginalizes over retrieved passages. Dense passage retrieval maps queries and passages into a shared embedding space and ranks passages by inner-product similarity \citep{Karpukhin2020densepassage}. This retrieval step is computationally related to nearest-neighbor methods. In particular, the $k$-nearest-neighbor language model constructs a nonparametric next-token distribution from nearby contextual representations and interpolates it with the distribution produced by a parametric language model \citep{Khandelwal2020generalizationthrough}. Later studies develop different ways to incorporate retrieved information into generation. Fusion-in-Decoder encodes retrieved passages separately and combines their representations in the decoder \citep{Izacard2021leveragingpassage}. REPLUG augments a frozen black-box language model and trains the retriever using feedback from the language model \citep{Shi2024replug}. Self-RAG learns when to retrieve and uses self-reflection signals to assess whether the retrieved passages support the generated response \citep{Asai2024selfrag}. These methods select external evidence according to proximity in a representation space, but they do not interpret the retrieved items as matched observations under the potential outcome framework.

Retrieval has also been incorporated into sequential decision-making. \citet{Goyal2022retrievalaugmentedreinforcement} augments reinforcement learning agents with direct access to datasets of past experience. \citet{Humphreys2022largescaleretrieval} uses approximate nearest-neighbor search over a large collection of expert states to support offline reinforcement learning. REGENT conditions a generalist policy on demonstrations retrieved from related tasks \citep{Sridhar2025regent}, whereas STRAP retrieves relevant sub-trajectories before learning a policy for a target task \citep{Memmel2024strap}. These studies place retrieval within reinforcement learning or imitation learning rather than plug-in policy learning under the potential outcome framework. In a different direction, CausalRAG incorporates causal graphs into retrieval and generation for knowledge-intensive tasks \citep{Wang2025causalrag}. Its use of causal information concerns relations among concepts in the retrieved documents rather than counterfactual outcomes under alternative actions.

Unlike these approaches, we formulate RAG-based action selection under the potential outcome framework. For the two-step method, vector search supplies local evidence for outcome regression, the generator estimates action-specific conditional expected outcomes or their contrasts, and the algorithm selects the action with the largest estimate. The one-step method is evaluated at the level of its returned policy because its internal criterion is not observed. The two-step formulation also separates regret due to candidate generation from regret due to expected-outcome estimation or ranking within the generated set.

\paragraph{Nonparametric analysis of neural networks and transformers.}
The statistical theory of neural networks provides a basis for viewing transformers as nonparametric estimators. Nonparametric regression with deep ReLU networks has been studied under compositional assumptions and over Besov-type function classes \citep{SchmidtHieber2020nonparametric,Suzuki2019adaptivityof,Suzuki2021deeplearning}. For transformers, \citet{Yun2020aretransformers} establishes universal approximation results for sequence-to-sequence functions, while \citet{Takakura2023approximationand} and \citet{Havrilla2024understandingscaling} examine approximation and estimation for high-dimensional sequence inputs and data with low-dimensional structure.

A related literature studies in-context learning as a statistical learning procedure. Existing work examines the function classes that transformers can learn from in-context examples and relates their predictions to regression algorithms, gradient descent, and algorithm selection \citep{Garg2022whatcan,Akyurek2023whatlearning,Oswald2023transformerslearn,Bai2023transformersas}. More recent studies analyze in-context learning for nonparametric regression and adaptation to low-dimensional target functions \citep{Kim2024transformersare,Oko2024pretrainedtransformer,Ching2026efficientand}. We use prediction-error guarantees from this literature as inputs to the regret analysis for transformer-based expected-outcome estimation.

\section{Setup}
Let $X\in\calX\subseteq\mathbb R^d$ denote covariates, which represent a query or decision context. The covariates can combine the query with pre-action individual attributes, such as age, gender, and occupation, when those attributes are available and relevant to the decision problem. Let $A\in\calA$ denote an action, where the action space $\calA$ can be finite or uncountably infinite. Let $Y\in\calY\subseteq\bbR$ denote a scalar outcome. We denote the conditional expected outcome of $Y$ given $A=a\in\calA$ and $X=x\in\calX$ by $f_0(a,x)\coloneqq\bbE\sqb{Y\mid A=a,X=x}$.

\subsection{Policy and Policy Value}
We focus on deterministic policies. A policy is a measurable function
\[
\pi\colon\calX\to\calA.
\]
Let $\Pi$ be a class of such policies. For $\pi\in\Pi$, define its value by
\[
V(\pi)
\coloneqq
\bbE\sqb{f_0(\pi(X),X)}.
\]

We denote a learned or otherwise data-dependent policy by
$\widehat a\colon\calX\to\calA$, so that $\widehat a(x)$ is the action
selected at covariate value $x$. Its value is
\[
V(\widehat a)
\coloneqq
\bbE\sqb{f_0(\widehat a(X),X)}.
\]
When $\widehat a$ depends on training data, the retrieval database, or
auxiliary algorithmic randomness, the expectation also averages over
this randomness.

For a chosen action $\widehat a(\cdot)$, we define its value by
\[
V(\widehat a)\coloneqq\bbE\sqb{f_0(\widehat a(X),X)}.
\]

\subsection{Optimal Policy}
We assume that the maximum of $f_0(a,x)$ over $a\in\calA$ is attained
and that a measurable maximizer can be selected. For each $x\in\calX$,
let
\[
a^*(x)
\in
\argmax_{a\in\calA} f_0(a,x).
\]
The measurable selector $a^*\colon\calX\to\calA$ is the optimal
deterministic policy. Its value is
\[
V(a^*)
=
\bbE\sqb{\max_{a\in\calA} f_0(a,X)}.
\]

\subsection{Regret}
The regret of $\widehat a$ relative to the optimal action rule $a^*$ is
\[
R(\widehat a)
\coloneqq
V(a^*)-V(\widehat a)
=
\bbE\sqb{
f_0(a^*(X),X)-f_0(\widehat a(X),X)
}.
\]

\section{RAG-PL}
\label{sec:rag-pl}
In this section, we describe our proposed method, called RAG-based policy learning (RAG-PL). We consider two forms of RAG-PL. The one-step method returns an action directly from the covariates. The two-step method first generates candidate actions and then selects one action from those candidates. The two-step method has two stages. The first stage constructs the candidate set. The second stage evaluates the candidates using retrieved evidence and selects one of them.

\subsection{RAG}
We consider three input-output forms of a RAG system. The first returns a set of candidate actions given the covariates. he second returns an estimated conditional expected outcome for each action or ranks the candidate actions by that quantity. The third returns a recommended action directly from the covariates.

\paragraph{Action-set RAG.}
Let $\mathfrak F(\calA)\coloneqq\cb{C\subseteq\calA:1\leq |C|<\infty}$ denote the collection of all non-empty finite subsets of $\calA$. We represent an action-set RAG as a set-valued function $g\colon\calX\to\mathfrak F(\calA)$. For each $x\in\calX$, the set $g(x)\subseteq\calA$ contains the candidate actions returned by the RAG system.

\paragraph{Vector search as nearest-neighbor matching.}
Let $\varphi\colon\calX\to\mathbb R^{d_\varphi}$ be the embedding map used for vector search, and write $H=\varphi(X)$. Let $\mathcal D=\cb{D_j}_{j=1}^{N_{\mathrm{db}}}$ be a retrieval database. When $\calA$ is discrete and the database contains observations of
covariates, actions, and outcomes, we write
$D_j=(X_j,A_j,Y_j)$ and, for an action $a$, define the action-specific
index set
$\mathcal I_a\coloneqq\cb{j:A_j=a}$. Given $x$ and $a$, let $\mathcal N_{k,a}(x)$ be the indices of the $k$ observations indexed by $\mathcal I_a$ whose embeddings are closest to $\varphi(x)$, and let $\mathcal R_k(x,a)\coloneqq\cb{D_j:j\in\mathcal N_{k,a}(x)}$ denote the retrieved evidence. We assume that $|\mathcal I_a|\geq k$ for every action evaluated through action-specific retrieval. This operation matches the current context with similar cases within each action, as in nearest-neighbor matching. For a general document corpus, $\mathcal R_k(x,a)$ denotes evidence retrieved by a query that includes the candidate action. In either case, similarity determines which evidence is used to evaluate an action. It is not itself the criterion for choosing the final action, and we do not select an action by the frequency with which it appears among neighboring cases. For continuous action spaces, exact action-specific retrieval based on
$\mathcal I_a=\cb{j:A_j=a}$ is generally unavailable. In that case,
retrieval must use an action-aware distance or kernel on
$\calA\times\calX$, or another procedure that borrows information across
nearby actions. The nearest-neighbor analysis in
Section~\ref{sec:nn-regret} is restricted to binary actions.

\paragraph{Expected-outcome RAG or ranking RAG.}
We represent an expected-outcome RAG by a function $\widehat f\colon\calA\times\calX\to\bbR$. Given an action $a\in\calA$ and covariates $x\in\calX$, it returns an estimate $\widehat f(a,x)$ of the conditional expected outcome $f_0(a,x)$. When the generator and retrieved evidence are written explicitly, we use
\[
\widehat f(a,x)
=
G_\theta\bigl(x,a,\mathcal R_k(x,a)\bigr).
\]
A ranking RAG instead receives $x$ and a finite candidate set $C=\cb{a_1,\dots,a_m}\in\mathfrak F(\calA)$. It retrieves evidence for the candidate actions and returns an ordered tuple $r(x,C)=(a_{(1)},\dots,a_{(m)})$, which is a permutation of the actions in $C$ from the highest to the lowest estimated conditional expected outcome.

\begin{remark}[Fixed action set]
    When the action set is given and finite, we write $\calA=\calK\coloneqq\cb{0,1,\dots,K-1}$. These actions are also called treatments or arms. In the potential-outcome notation, each unit has potential outcomes $Y(0),Y(1),\dots,Y(K-1)$. We use the special case $K=2$ when discussing the relation between the selected action and the conditional average treatment effect.
\end{remark}

\paragraph{Policy RAG.}
A policy RAG receives covariates and directly returns a recommended action. We represent it by a function $\pi_{\mathrm{RAG}}\colon\calX\to\calA$.

\subsection{One-Step RAG-PL}
The one-step RAG-PL method applies a policy RAG directly. Given $x\in\calX$, the RAG system is instructed to recommend an action $\widehat a_{\mathrm{one}}(x)$ with the highest conditional expected outcome, and we write $\widehat a_{\mathrm{one}}(x)=\pi_{\mathrm{RAG}}(x)$. The policy implemented by the RAG system is usually not observed separately from its output. In particular, the one-step method does not expose an intermediate candidate set, expected outcomes for the candidate actions, or a separate ranking step. We therefore evaluate the returned policy without imposing a particular interpretation on its internal computation.

\subsection{Two-Step RAG-PL}

\paragraph{Action-set generation.}
Given $x\in\calX$, the action-set RAG returns a finite candidate set
\[
g(x)=\cb{a_1(x),\dots,a_{M(x)}(x)}\subseteq\calA,
\qquad 1\leq M(x)<\infty.
\]
Thus, even when $\calA$ is uncountably infinite, the subsequent evaluation or ranking is performed only over the finite set $g(x)$. The candidate set can contain actions found in matched cases and actions generated by the language model. Its purpose is to include actions with high conditional expected outcomes, rather than to reproduce the empirical distribution of actions in the retrieved evidence.

\paragraph{Ranking or expected-outcome estimation.}
For every $a\in g(x)$, the RAG system retrieves $\mathcal R_k(x,a)$. When an expected-outcome RAG is used, it evaluates $\widehat f(a,x)=G_\theta(x,a,\mathcal R_k(x,a))$. When a ranking RAG is used, it returns the ordered tuple $r(x,g(x))=(a_{(1)}(x),\dots,a_{(M(x))}(x))$, where $a_{(1)}(x)$ is the candidate with the highest estimated conditional expected outcome.

\paragraph{Action choice.}
Under expected-outcome estimation, we choose
\[
\widehat a_{\mathrm{two}}(x)\in\argmax_{a\in g(x)}\widehat f(a,x),
\]
where a fixed rule is used to break ties. Under ranking, we choose
\[
\widehat a_{\mathrm{two}}(x)\coloneqq a_{(1)}(x).
\]
Both methods define a policy from $\calX$ to $\calA$. The two-step method makes action generation, retrieval of similar cases, and action choice explicit, which allows their contributions to regret to be studied separately.

\section{Regret Analysis}
\label{sec:regret-analysis}
We analyze the regret of the one-step and two-step methods under the setup above. We write $R(\widehat a)$ for the regret of a selected action $\widehat a(\cdot)$. When the candidate set, the expected-outcome estimate, or the selected action is random, the expectation defining $R(\widehat a)$ and all expectations below also average over this randomness. We first separate the effect of the generated candidate set from the effect of selecting an action within that set. We then connect the second term to nonparametric prediction error. Section~\ref{sec:nn-regret} studies the case in which vector search is represented explicitly by nearest-neighbor matching.

\subsection{Regret Decomposition for Two-Step RAG-PL}
For each $x\in\calX$, let
\[
a_g^*(x)\in\argmax_{a\in g(x)}f_0(a,x),
\]
where the fixed tie-breaking rule is used when needed, and we assume that the resulting selector is measurable. The maximum is attained because $g(x)$ is finite. We define
\[
R_{\mathrm{gen}}(g)
\coloneqq
\bbE\sqb{f_0(a^*(X),X)-f_0(a_g^*(X),X)}
\]
and
\[
R_{\mathrm{choice}}(\widehat a_{\mathrm{two}};g)
\coloneqq
\bbE\sqb{f_0(a_g^*(X),X)-f_0(\widehat a_{\mathrm{two}}(X),X)}.
\]
The first term is the regret caused by restricting the choice to $g(X)$. The second is the regret caused by expected-outcome estimation or ranking within $g(X)$. We call $R_{\mathrm{gen}}(g)$ the candidate-set regret and $R_{\mathrm{choice}}(\widehat a_{\mathrm{two}};g)$ the
within-candidate regret.

Adding and subtracting $f_0(a_g^*(X),X)$ gives
\begin{equation}
\label{eq:two-step-decomposition}
R(\widehat a_{\mathrm{two}})
=
R_{\mathrm{gen}}(g)
+
R_{\mathrm{choice}}(\widehat a_{\mathrm{two}};g).
\end{equation}
This identity does not require independence between candidate generation and action choice. 

\subsection{Bounding the Candidate-Set Regret}
For $\varepsilon\geq0$, define the set of $\varepsilon$-optimal actions by
\[
\calA_\varepsilon^*(x)
\coloneqq
\cb{a\in\calA:f_0(a^*(x),x)-f_0(a,x)\leq\varepsilon}.
\]
Exact inclusion of $a^*(x)$ is not required. It is enough that the generated set contain an action whose conditional expected outcome is close to the optimum.

\begin{proposition}[Bounds for the generated candidate set]
\label{prop:candidate-bounds}
The following statements hold.
\begin{enumerate}
    \item If $g(x)\cap\calA_\varepsilon^*(x)\neq\varnothing$ holds for $P_X$-almost every $x$, then we have
    \[
    R_{\mathrm{gen}}(g)\leq\varepsilon.
    \]
    \item Suppose that $0\leq f_0(a^*(x),x)-f_0(a,x)\leq B_f$ for every $(a,x)\in\calA\times\calX$. If
    \[
    \Pr\p{g(X)\cap\calA_\varepsilon^*(X)=\varnothing}\leq\delta
    \]
    holds, then we have
    \[
    R_{\mathrm{gen}}(g)\leq\varepsilon+B_f\delta.
    \]
    \item Let $d_\calA$ be a metric on $\calA$, and define $d_\calA(a,C)\coloneqq\min_{b\in C}d_\calA(a,b)$ for every non-empty finite set $C$. Suppose that, for some $L_\calA>0$ and $\beta_\calA>0$,
    \[
    |f_0(a,x)-f_0(b,x)|
    \leq
    L_\calA d_\calA(a,b)^{\beta_\calA}
    \]
    holds for every $a,b\in\calA$ and $x\in\calX$. Then, we have
    \[
    R_{\mathrm{gen}}(g)
    \leq
    L_\calA\bbE\sqb{d_\calA(a^*(X),g(X))^{\beta_\calA}}.
    \]
\end{enumerate}
\end{proposition}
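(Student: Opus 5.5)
The plan is to prove all three parts through a single pointwise inequality and then take expectations. For each $x$, write the pointwise candidate-set gap as
\[
\Delta(x)\coloneqq f_0(a^*(x),x)-f_0(a_g^*(x),x)=\min_{b\in g(x)}\bigl\{f_0(a^*(x),x)-f_0(b,x)\bigr\}\geq 0.
\]
The equality holds because $a_g^*(x)$ maximizes $f_0(\cdot,x)$ over the finite set $g(x)$. Nonnegativity holds because $a^*(x)$ maximizes over all of $\calA\supseteq g(x)$. Hence $R_{\mathrm{gen}}(g)=\bbE[\Delta(X)]$, and the key observation is that $\Delta(x)\leq f_0(a^*(x),x)-f_0(b,x)$ for every $b\in g(x)$. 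Each part then reduces to choosing a suitable $b$.

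For part 1, on the event $g(x)\cap\calA_\varepsilon^*(x)\neq\varnothing$, pick $b$ in the intersection. This gives $\Delta(x)\leq\varepsilon$ for $P_X$-a.e.\ $x$, and taking expectations yields $R_{\mathrm{gen}}(g)\leq\varepsilon$.

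For part 2, split on the event $E\coloneqq\{g(X)\cap\calA_\varepsilon^*(X)=\varnothing\}$. On $E^c$ we have $\Delta(X)\leq\varepsilon$ as in part 1. On $E$ we use the uniform bound $\Delta(X)\leq B_f$, which follows from taking $b=a_g^*(X)$ in the assumed gap bound. Therefore
\[
\bbE[\Delta(X)]\leq\varepsilon\Pr(E^c)+B_f\Pr(E)\leq\varepsilon+B_f\delta.
\]
When $g$ is random, the same computation applies with the probability and expectation taken jointly over $X$ and the randomness of $g$, as stipulated in the setup.

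For part 3, choose $b\in g(x)$ attaining $d_\calA(a^*(x),g(x))$; the minimum is attained since $g(x)$ is finite. The Hölder-type condition then gives
\[
\Delta(x)\leq f_0(a^*(x),x)-f_0(b,x)\leq L_\calA\,d_\calA(a^*(x),b)^{\beta_\calA}=L_\calA\,d_\calA(a^*(x),g(x))^{\beta_\calA},
\]
and we take expectations.

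There is no real obstacle here; the only points needing care are technical. The first is measurability, so that $\Delta(X)$ and $d_\calA(a^*(X),g(X))$ are random variables. This follows from the assumed measurable selectors $a^*$ and $a_g^*$ (the right-hand side in part 3 can be read as an outer expectation otherwise). The second is ensuring that the expectations are well defined. This holds because $\Delta\geq0$, so they exist in $[0,\infty]$.
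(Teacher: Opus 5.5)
Your proposal is correct and follows essentially the same route as the paper: both bound $f_0(a^*(x),x)-f_0(a_g^*(x),x)$ pointwise by comparing $a_g^*(x)$ with a well-chosen $b\in g(x)$, then take expectations. The choices of $b$ match the paper's in each part: an element of $g(x)\cap\calA_\varepsilon^*(x)$, the bound $B_f$ on the complementary event, or an element attaining $d_\calA(a^*(x),g(x))$. Your unified writing of the gap as a minimum over $b\in g(x)$ and your measurability and integrability remarks are presentational additions, not a different argument.
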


The third statement controls the value of the best action in $g(x)$ through its distance from $a^*(x)$. It does not concern the estimation error of $\widehat f$. The smoothness condition on $f_0$ converts distance in the action space into a difference in conditional expected outcomes.

\subsection{Expected-Outcome Estimation and Ranking}
Suppose that the two-step method chooses
\[
\widehat a_{\mathrm{two}}(x)
\in
\argmax_{a\in g(x)}\widehat f(a,x).
\]
For each $x\in\calX$, define
\[
\delta_f(x;g)
\coloneqq
\max_{a\in g(x)}|\widehat f(a,x)-f_0(a,x)|.
\]
Let $v_g^*(x)\coloneqq\max_{a\in g(x)}f_0(a,x)$ and define the smallest positive gap within the candidate set by
\[
\Delta_g(x)
\coloneqq
\min_{a\in g(x):f_0(a,x)<v_g^*(x)}
\cb{v_g^*(x)-f_0(a,x)},
\]
where $\Delta_g(x)=+\infty$ if every action in $g(x)$ attains $v_g^*(x)$.

\begin{theorem}[Regret from expected-outcome estimation]
\label{thm:expected-outcome-regret}
For every $x\in\calX$, it holds that
\[
f_0(a_g^*(x),x)-f_0(\widehat a_{\mathrm{two}}(x),x)
\leq
2\delta_f(x;g).
\]
Consequently,
\[
R(\widehat a_{\mathrm{two}})
\leq
R_{\mathrm{gen}}(g)
+
2\bbE\sqb{\delta_f(X;g)}.
\]
Suppose in addition that $\delta_f(X;g)\leq r_f$ almost surely and that, for some $C_M>0$, $\kappa\geq0$, and $t_0>0$,
\[
\Pr\p{0<\Delta_g(X)\leq t}
\leq
C_Mt^\kappa,
\qquad
0<t\leq t_0.
\]
If $2r_f\leq t_0$, then
\[
R_{\mathrm{choice}}(\widehat a_{\mathrm{two}};g)
\leq
C_M(2r_f)^{1+\kappa}.
\]
\end{theorem}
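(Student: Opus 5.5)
The plan is to prove the three claims in order, each resting on the pointwise comparison in the first one.

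\emph{Pointwise bound.} Fix $x$ and write $\widehat a=\widehat a_{\mathrm{two}}(x)$ and $a_g^*=a_g^*(x)$. Both actions lie in $g(x)$. Since $\widehat a$ maximizes $\widehat f(\cdot,x)$ over $g(x)$, we have $\widehat f(a_g^*,x)\le \widehat f(\widehat a,x)$. I will then use the standard plug-in chain:
\begin{align*}
f_0(a_g^*,x)-f_0(\widehat a,x)
&=\bigl[f_0(a_g^*,x)-\widehat f(a_g^*,x)\bigr]+\bigl[\widehat f(a_g^*,x)-\widehat f(\widehat a,x)\bigr]\\
&\quad+\bigl[\widehat f(\widehat a,x)-f_0(\widehat a,x)\bigr].
\end{align*}
The middle bracket is nonpositive, and each outer bracket is at most $\delta_f(x;g)$. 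This gives the bound $2\delta_f(x;g)$.

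\emph{Integrated bound.} Take expectations over $X$ and over any randomness in $g$ or $\widehat f$. This yields $R_{\mathrm{choice}}\le 2\bbE[\delta_f(X;g)]$. Combining with the decomposition \eqref{eq:two-step-decomposition} gives the stated bound on $R(\widehat a_{\mathrm{two}})$. No independence between $g$ and $\widehat f$ is needed, since the inequality holds pointwise. I will take for granted the measurability of $\delta_f$ and the selectors.

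\emph{Margin-type bound.} This is the standard fast-rate argument of \citet{Audibert2011fast}, and the only real work is the case split. Let $L(x)=f_0(a_g^*(x),x)-f_0(\widehat a(x),x)\ge 0$.
\begin{itemize}
\item If $L(x)=0$, there is no contribution. This covers $\Delta_g(x)=+\infty$, where every candidate is optimal.
\item If $L(x)>0$, then $\widehat a(x)$ is suboptimal within $g(x)$, so $L(x)\ge \Delta_g(x)$. By the first claim, $L(x)\le 2\delta_f(x;g)\le 2r_f$.
\end{itemize}
Hence on the event $\{L(X)>0\}$ we have $0<\Delta_g(X)\le 2r_f$, and therefore
\[
R_{\mathrm{choice}}\le 2r_f\Pr\p{0<\Delta_g(X)\le 2r_f}\le 2r_f\,C_M(2r_f)^\kappa=C_M(2r_f)^{1+\kappa},
\]
using $2r_f\le t_0$.

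\emph{Main obstacle.} The delicate step is the claim $L>0\Rightarrow\Delta_g\le L$. It holds because $\Delta_g$ is the minimum over a finite set of positive gaps, one of which is $L$ itself. Note also that $\Delta_g(x)$ depends on $g(x)$, which may be random. The margin condition must therefore be read under the joint law of $(X,g)$, consistent with the paper's convention that all expectations average over algorithmic randomness.
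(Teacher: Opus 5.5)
Your proposal is correct and matches the paper's proof step for step. It uses the same three-term plug-in decomposition for the pointwise bound, then takes expectations and combines with \eqref{eq:two-step-decomposition}, and finally observes that a suboptimal within-candidate choice forces $0<\Delta_g(X)\le 2r_f$, which gives $2r_f\Pr\p{0<\Delta_g(X)\le 2r_f}\le C_M(2r_f)^{1+\kappa}$; your explicit justification of $L>0\Rightarrow\Delta_g\le L$ and your remark about reading the margin condition under the joint law of $(X,g)$ are details the paper leaves implicit.
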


The theorem is stated for the realized set $g(X)$ and therefore does not require candidate generation to be independent of $\widehat f$. A high-probability error bound can be used in the same way. If $0\leq f_0(a_g^*(X),X)-f_0(a,X)\leq B_f$ holds almost surely for every $a\in g(X)$, the margin condition holds, and
\[
\Pr\p{\delta_f(X;g)>r_f}\leq\eta_f,
\]
then
\begin{equation}
\label{eq:high-prob-choice}
R_{\mathrm{choice}}(\widehat a_{\mathrm{two}};g)
\leq
C_M(2r_f)^{1+\kappa}+B_f\eta_f.
\end{equation}
For example, suppose that $|g(X)|\leq M$ and, conditional on $X$ and $g(X)$,
\[
\Pr\left(
|\widehat f(a,X)-f_0(a,X)|>b_n+u
\mathrel{\Big|}
X,g(X)
\right)
\leq
c_1\exp(-c_2a_nu^2)
\]
for every $a\in g(X)$. A union bound gives
\[
\Pr\p{\delta_f(X;g)>b_n+u}
\leq
c_1M\exp(-c_2a_nu^2).
\]
Thus, the number of candidates enters the simultaneous error through a logarithmic term when $u$ is chosen to make the right-hand side small.

For a ranking RAG, let $a_{(1)}(x)$ be the first action in $r(x,g(x))$. Then,
\[
R(a_{(1)})
=
R_{\mathrm{gen}}(g)
+
\bbE\sqb{f_0(a_g^*(X),X)-f_0(a_{(1)}(X),X)}.
\]
When the ranking orders the candidates according to $\widehat f(a,x)$, Theorem~\ref{thm:expected-outcome-regret} applies. If the RAG system returns only an ordering, its analysis requires a direct bound on the true conditional expected-outcome difference between the best candidate and the first-ranked candidate.

\subsection{Relation to Policy-Value Maximization}
Let
\[
\mathcal S_g
\coloneqq
\cb{s\colon\calX\to\calA:s\text{ is measurable and }s(x)\in g(x)}
\]
and define the value induced by $\widehat f$ as
\[
\widehat V_f(s)
\coloneqq
\bbE\sqb{\widehat f(s(X),X)}.
\]
If a measurable selector $\widehat s_f$ satisfies
\[
\widehat s_f(x)
\in
\argmax_{a\in g(x)}\widehat f(a,x),
\]
then, for every $s\in\mathcal S_g$, it holds that $\widehat f(\widehat s_f(x),x)\geq\widehat f(s(x),x)$. Hence,
\[
\widehat s_f
\in
\argmax_{s\in\mathcal S_g}\widehat V_f(s).
\]
The same pointwise argument applies to an empirical average over target contexts. Thus, when the policy class contains every measurable selector and its value is constructed from $\widehat f$, pointwise maximization and maximization over the policy class are the same optimization problem. They can differ when the policy class is restricted or when the value estimator uses observed outcomes through IPW or AIPW rather than being constructed from $\widehat f$ alone.

\subsection{Substitution of Nonparametric Rates}
The preceding results allow a prediction rate to be inserted into the regret bound without changing the RAG-PL algorithm. Suppose that $R_{\mathrm{gen}}(g)\leq r_{g,n}$ and that the expected-outcome error on the generated set is bounded by $r_{f,n}$. Under the finite-candidate margin condition in Theorem~\ref{thm:expected-outcome-regret}, we have
\[
R(\widehat a_{\mathrm{two}})
\leq
r_{g,n}+C_M(2r_{f,n})^{1+\kappa}.
\]
For example, if a nonparametric estimator has a uniform error bound $r_{f,n}\asymp n^{-\alpha/(2\alpha+d)}$, the second term is of order $n^{-\alpha(1+\kappa)/(2\alpha+d)}$. This substitution is valid only when the available prediction result controls the error required by the regret theorem. An integrated mean-squared error cannot be used as a uniform error without an additional argument.

The same distinction matters for transformers. The analysis of \citet{Kim2024transformersare} gives a mean-squared prediction bound for a transformer composed of a learned neural representation and a linear-attention layer. In the Besov setting, a representative bound has the form
\[
q_{\mathrm{ICL}}
\lesssim
N_{\mathrm{rep}}^{-2\alpha/d}
+
\frac{N_{\mathrm{rep}}\log N_{\mathrm{rep}}}{n_{\mathrm{ctx}}}
+
\frac{N_{\mathrm{rep}}^2\log N_{\mathrm{rep}}}{T_{\mathrm{pre}}},
\]
where $N_{\mathrm{rep}}$ is the representation dimension, $n_{\mathrm{ctx}}$ is the number of in-context examples, and $T_{\mathrm{pre}}$ is the number of pretraining tasks. If this bound holds for both binary expected-outcome estimates, Theorem~\ref{thm:binary-regret} below gives
\[
R(\widehat a)
\lesssim
(q_{\mathrm{ICL},0}+q_{\mathrm{ICL},1})^{(1+\kappa)/(2+\kappa)}.
\]
A sharper exponent $1+\kappa$ for the root prediction rate requires a uniform error bound or a pointwise exponential-deviation bound. It does not follow from an MSE result alone. The results of \citet{Oko2024pretrainedtransformer} and \citet{Ching2026efficientand} further describe adaptation to low-dimensional target functions and minimax MSE rates, and they enter the regret analysis according to the same distinction between error types.

The general regret decomposition itself has no sample size. For a specified retrieval database, its size and the number of matched observations have a direct statistical meaning. Transformer theory uses different quantities for the representation dimension, in-context examples, and pretraining tasks. For an already trained RAG system whose training data are not observed, the total number of pretraining tokens is not by itself a sample size for conditional expected-outcome estimation. In that case, the regret bound should be stated in terms of a prediction error that is either assumed or evaluated separately.

\subsection{Relation Between One-Step and Two-Step RAG-PL}
The regret of the one-step output is
\[
R(\widehat a_{\mathrm{one}})
=
\bbE\sqb{f_0(a^*(X),X)-f_0(\widehat a_{\mathrm{one}}(X),X)}.
\]
For comparison, its observed output can be represented by the singleton set $g_{\mathrm{one}}(x)=\cb{\widehat a_{\mathrm{one}}(x)}$. This representation does not assert that the RAG system uses a singleton set internally.

Suppose that $\widehat a_{\mathrm{one}}(x)\in g(x)$ for $P_X$-almost every $x$. Because $a_g^*(x)$ is at least as good as every action in $g(x)$, we have
\[
R_{\mathrm{gen}}(g)
\leq
R(\widehat a_{\mathrm{one}}).
\]
Combining this inequality with \eqref{eq:two-step-decomposition} gives
\[
R(\widehat a_{\mathrm{two}})
\leq
R(\widehat a_{\mathrm{one}})
+
R_{\mathrm{choice}}(\widehat a_{\mathrm{two}};g).
\]
If the two-step method selects the best action in $g(x)$ according to $f_0$, its regret is no larger than the regret of the one-step output. With an estimated expected outcome, Theorem~\ref{thm:expected-outcome-regret} instead gives
\[
R(\widehat a_{\mathrm{two}})
\leq
R(\widehat a_{\mathrm{one}})
+
2\bbE\sqb{\delta_f(X;g)}.
\]
Thus, a larger candidate set can contain a better action than the one-step output, but the gain can be lost when expected-outcome estimation or ranking is inaccurate.

\subsection{Fixed Action Sets and Binary Actions}
\label{subsec:binary-regret}
When the action set is fixed and finite, we can set $g(x)=\calA=\calK$ for every $x$. Then, $R_{\mathrm{gen}}(g)=0$, and Theorem~\ref{thm:expected-outcome-regret} gives
\[
R(\widehat a_{\mathrm{two}})
\leq
2\bbE\sqb{\max_{a\in\calK}|\widehat f(a,X)-f_0(a,X)|}.
\]

For the binary action set $\calK=\cb{0,1}$, define
\[
\tau_0(x)
\coloneqq
f_0(1,x)-f_0(0,x),
\qquad
\widehat\tau(x)
\coloneqq
\widehat f(1,x)-\widehat f(0,x).
\]
Using the tie-breaking rule that selects action $1$ at equality, we have $a^*(x)=\mathbbm{1}\sqb{\tau_0(x)\geq0}$ and $\widehat a(x)=\mathbbm{1}\sqb{\widehat\tau(x)\geq0}$. For any selected action $\widehat a(x)\in\cb{0,1}$,
\begin{equation}
\label{eq:binary-regret-identity}
R(\widehat a)
=
\bbE\sqb{|\tau_0(X)|\mathbbm{1}\sqb{\widehat a(X)\neq a^*(X)}}.
\end{equation}
When $f_0(a,x)=\bbE\sqb{Y(a)\mid X=x}$, $\tau_0(x)$ is the conditional average treatment effect.

We use the margin condition
\begin{equation}
\label{eq:binary-margin}
\Pr\p{0<|\tau_0(X)|\leq t}
\leq
C_Mt^\kappa,
\qquad
0<t\leq t_0.
\end{equation}
It controls the probability of contexts for which the two actions have nearly equal conditional expected outcomes \citep{Audibert2011fast}.

\begin{theorem}[Binary regret bounds]
\label{thm:binary-regret}
Suppose that \eqref{eq:binary-margin} holds.
\begin{enumerate}
    \item If $\lVert\widehat\tau-\tau_0\rVert_\infty\leq r_\tau$ holds almost surely and $r_\tau\leq t_0$, then
    \[
    R(\widehat a)
    \leq
    C_Mr_\tau^{1+\kappa}.
    \]
    In particular, if $\max_{a\in\cb{0,1}}|\widehat f(a,x)-f_0(a,x)|\leq r_f$ for $P_X$-almost every $x$, then $R(\widehat a)\leq C_M(2r_f)^{1+\kappa}$ whenever $2r_f\leq t_0$.
    \item Suppose that $|\tau_0(X)|\leq B_\tau$ holds almost surely and that there exist $b_n\geq0$, $a_n>0$, and $\delta_n\geq0$, together with constants $c_1,c_2>0$, such that
    \[
    \Pr\p{|\widehat\tau(X)-\tau_0(X)|\geq b_n+u\mid X=x}
    \leq
    c_1\exp(-c_2a_nu^2)+\delta_n
    \]
    holds for $P_X$-almost every $x\in\calX$ and every $u>0$. If $b_n\leq t_0/4$ and $a_n^{-1/2}\leq t_0/4$, then
    \[
    R(\widehat a)
    \leq
    C(b_n+a_n^{-1/2})^{1+\kappa}
    +
    B_\tau\delta_n.
    \]
    \item If 
    \[
    \bbE\sqb{(\widehat\tau(X)-\tau_0(X))^2}
    \leq
    q_n
    \]
    and $q_n^{1/(2+\kappa)}\leq t_0$ hold, then
    \[
    R(\widehat a)
    \leq
    Cq_n^{(1+\kappa)/(2+\kappa)}.
    \]
\end{enumerate}
The constants in the second and third statements depend only on the constants displayed in their assumptions and the margin condition.
\end{theorem}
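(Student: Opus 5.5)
All three parts start from the identity \eqref{eq:binary-regret-identity}, $R(\widehat a)=\bbE[|\tau_0(X)|\mathbbm{1}\{\widehat a(X)\neq a^*(X)\}]$, and the elementary observation that a sign disagreement forces the estimation error to exceed the true margin. If $\widehat a(x)\neq a^*(x)$, then $\widehat\tau(x)$ and $\tau_0(x)$ lie on opposite sides of zero, with equality counted as action $1$. In that case $|\widehat\tau(x)-\tau_0(x)|\geq|\tau_0(x)|$. Points with $\tau_0(x)=0$ contribute nothing to the regret, so we may restrict attention to $\{|\tau_0(X)|>0\}$, which is the event controlled by the margin condition \eqref{eq:binary-margin}.

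\textbf{Part 1.} Under $\|\widehat\tau-\tau_0\|_\infty\leq r_\tau$, a mistake can only occur when $0<|\tau_0(X)|\leq r_\tau$. Hence
\[
R(\widehat a)\leq r_\tau\Pr(0<|\tau_0(X)|\leq r_\tau)\leq C_M r_\tau^{1+\kappa},
\]
using $r_\tau\leq t_0$. The ``in particular'' clause follows from $|\widehat\tau-\tau_0|\leq 2\max_a|\widehat f(a,\cdot)-f_0(a,\cdot)|$, which gives $r_\tau=2r_f$.

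\textbf{Part 2.} Set $s_n:=b_n+a_n^{-1/2}$ and split the regret according to the size of $|\tau_0(X)|$.
\begin{itemize}
\item On $\{|\tau_0|\leq 2s_n\}$, bound the indicator by one. Since $2s_n\leq t_0$, this piece is at most $2s_n\cdot C_M(2s_n)^\kappa$.
\item On $\{|\tau_0|>2s_n\}$, condition on $X$. A mistake requires an error of at least $|\tau_0(X)|=b_n+u$ with $u=|\tau_0(X)|-b_n\geq|\tau_0(X)|/2$; this uses $|\tau_0|>2b_n$. So the conditional mistake probability is at most $c_1\exp(-c_2a_n\tau_0^2/4)+\delta_n$. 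The $\delta_n$ part contributes at most $B_\tau\delta_n$.
\item For the exponential part, peel into dyadic shells $2^js_n<|\tau_0|\leq 2^{j+1}s_n$. As long as $2^{j+1}s_n\leq t_0$, the margin condition applies and each shell contributes at most
\[
2^{j+1}s_n\cdot C_M(2^{j+1}s_n)^\kappa\, c_1\exp\!\bigl(-c_2 4^{j}a_ns_n^2/4\bigr).
\]
Because $a_ns_n^2\geq1$, these terms sum to $C s_n^{1+\kappa}$.
\item Shells beyond $t_0$ need separate care. There I would bound $|\tau_0|\leq B_\tau$ and the probability by one, leaving $B_\tau c_1\exp(-c_2a_nt_0^2/16)$. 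This must then be absorbed into $Cs_n^{1+\kappa}$ using $a_n^{-1/2}\leq s_n$ and $\sup_a a^{(1+\kappa)/2}e^{-ca}<\infty$, with a constant depending on $t_0,B_\tau,c_1,c_2$.
\end{itemize}
I expect this tail bookkeeping to be the main obstacle, since the margin condition holds only up to $t_0$. The hypotheses $b_n,a_n^{-1/2}\leq t_0/4$ are exactly what make the first shells legitimate.

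\textbf{Part 3.} Fix a threshold $t\leq t_0$ and split the regret at $|\tau_0|=t$.
\begin{itemize}
\item On $\{0<|\tau_0|\leq t\}$, the contribution is at most $C_Mt^{1+\kappa}$.
\item On $\{|\tau_0|>t\}$, use the mistake inequality $|\tau_0|\leq|\widehat\tau-\tau_0|$ to get $|\tau_0|\mathbbm{1}\{\text{mistake}\}\leq|\widehat\tau-\tau_0|^2/|\tau_0|\leq(\widehat\tau-\tau_0)^2/t$. This contribution is therefore at most $q_n/t$.
\end{itemize}
Choosing $t=q_n^{1/(2+\kappa)}$, which is at most $t_0$ by assumption, balances the two pieces and gives $(C_M+1)q_n^{(1+\kappa)/(2+\kappa)}$.
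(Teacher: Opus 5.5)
Your proposal is correct and follows essentially the same route as the paper. Part~1 uses the sign-error containment $|\widehat\tau-\tau_0|\geq|\tau_0|$ plus the margin condition. Part~2 splits at a multiple of $b_n$, peels the sub-Gaussian term dyadically, and absorbs the region beyond $t_0$ via $|\tau_0|\leq B_\tau$. Part~3 splits at a threshold and uses $|\tau_0|\leq(\widehat\tau-\tau_0)^2/t$ with $t=q_n^{1/(2+\kappa)}$. Your single scale $s_n=b_n+a_n^{-1/2}$ is only a cosmetic change. Lumping all shells with $2^{j+1}s_n>t_0$ into $\{|\tau_0|>t_0/2\}$ is also cosmetic, and it handles the shell straddling $t_0$ slightly more carefully than the paper's sketch.
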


\section{Nearest-Neighbor Matching and Expected-Outcome Regret}
\label{sec:nn-regret}
Vector search connects RAG-PL with causal inference by selecting observations whose covariates are close to the current context. This section gives an explicit nearest-neighbor regression model in which the expected-outcome estimate in Section~\ref{sec:rag-pl} is formed from matched observations and its error can be substituted into the regret bounds above.

\subsection{Nearest-Neighbor Matching}
For each $a\in\cb{0,1}$, let
\[
\mathcal D_a
=
\cb{(H_{a,j},Y_{a,j})}_{j=1}^{N_a}
\]
be a database for action $a$, where $H_{a,1},\dots,H_{a,N_a}$ are independently and identically distributed according to $Q_a$ on $\mathbb R^{d_\varphi}$ and
\[
Y_{a,j}=m_a(H_{a,j})+\xi_{a,j}.
\]
We assume that the query $X$ is independent of the action-specific databases. For a query $x$, write $h=\varphi(x)$ and let $\mathcal N_{k,a}(h)$ be the indices of the $k$ nearest embeddings in $\mathcal D_a$. Define the matched expected-outcome estimate by
\[
\widetilde f_k(a,x)
\coloneqq
\frac{1}{k}\sum_{j\in\mathcal N_{k,a}(h)}Y_{a,j}.
\]
The following result gives a standard bias and variance calculation for nearest-neighbor regression in the notation used by RAG-PL \citep{Jiang2019nonasymptoticuniform}. Its use as an action-specific matching estimator follows the setup of \citet{Abadie2006largesample}.

\begin{theorem}[Estimation error of the expected outcome from nearest-neighbor matching]
\label{thm:nn-expected-outcome}
Fix $a\in\cb{0,1}$. Suppose that the support of $H_{a,j}$ has finite diameter $D_H$ and that the following conditions hold.
\begin{enumerate}
    \item The function $m_a$ is $\beta$-Hölder for some $\beta\in(0,1]$ and $L_a>0$:
    \[
    |m_a(h)-m_a(h')|
    \leq
    L_a\lVert h-h'\rVert^\beta.
    \]
    \item For some $c_a>0$ and $r_0>0$, every query embedding $h$ and every $0<r\leq r_0$ satisfy
    \[
    Q_a(B(h,r))\geq c_ar^{d_\varphi},
    \]
    where $B(h,r)\coloneqq\cb{h'\in\mathbb R^{d_\varphi}:\lVert h'-h\rVert\leq r}$.
    \item Conditional on the embeddings, the errors $\xi_{a,j}$ are independent, mean zero, and $\sigma_a^2$-sub-Gaussian for some $\sigma_a>0$.
\end{enumerate}
Let
\[
r_{a,k}
\coloneqq
\left(\frac{2k}{c_aN_a}\right)^{1/d_\varphi},
\qquad
b_{\varphi,a}
\coloneqq
\sup_{x\in\calX}|m_a(\varphi(x))-f_0(a,x)|,
\]
and suppose that $r_{a,k}\leq r_0$. Then, for every $x\in\calX$ and $u>0$, we have
\[
\Pr\left(
|\widetilde f_k(a,x)-f_0(a,x)|
\geq
b_{\varphi,a}+L_ar_{a,k}^\beta+u
\right)
\leq
2\exp\left(-\frac{ku^2}{2\sigma_a^2}\right)
+
\exp(-k/4).
\]
Moreover, we have
\[
\bbE\sqb{(\widetilde f_k(a,X)-f_0(a,X))^2}
\leq
C\left(
 b_{\varphi,a}^2
 +
 \left(\frac{k}{N_a}\right)^{2\beta/d_\varphi}
 +
 \frac{1}{k}
 +
 \exp(-k/4)
\right).
\]
\end{theorem}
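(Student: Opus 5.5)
We will fix $x$, write $h=\varphi(x)$, and work conditionally on the embeddings $H_{a,1},\dots,H_{a,N_a}$ in $\mathcal D_a$ (the query is independent of the database). The plan is to decompose
\[
\widetilde f_k(a,x)-f_0(a,x)
=
\bigl(m_a(h)-f_0(a,x)\bigr)
+\frac1k\sum_{j\in\mathcal N_{k,a}(h)}\bigl(m_a(H_{a,j})-m_a(h)\bigr)
+\frac1k\sum_{j\in\mathcal N_{k,a}(h)}\xi_{a,j}.
\]
The three terms are a representation error, a matching bias, and a noise average. The first term is bounded by $b_{\varphi,a}$ by definition.

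\textbf{Matching bias.} Let $\rho_k(h)$ be the distance from $h$ to its $k$-th nearest embedding in $\mathcal D_a$. By the Hölder condition, the bias term is at most $L_a\rho_k(h)^\beta$. On the event $E=\{\rho_k(h)\leq r_{a,k}\}$ it is at most $L_ar_{a,k}^\beta$.

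To control $E^c$, we use the following equivalence: $E^c$ holds exactly when fewer than $k$ embeddings fall in $B(h,r_{a,k})$. The count is $\mathrm{Bin}(N_a,p)$ with $p\geq c_ar_{a,k}^{d_\varphi}=2k/N_a$, using $r_{a,k}\leq r_0$ and the mass condition. Its mean is therefore $\mu\geq 2k$. The multiplicative Chernoff bound $\Pr(S<\mu/2)\leq\exp(-\mu/8)\leq\exp(-k/4)$ then gives $\Pr(E^c)\leq \exp(-k/4)$.

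This step is the only nontrivial one. The delicate points are choosing the constant $2$ in $r_{a,k}$ so that the Chernoff exponent comes out as $k/4$, and the monotonicity in $p$, which lets us replace $p$ by its lower bound.

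\textbf{Noise.} Conditional on the embeddings, the neighbor set $\mathcal N_{k,a}(h)$ is fixed. This holds because the neighbors depend only on the $H$'s, ties being broken by a fixed rule. The errors are independent, mean zero and $\sigma_a^2$-sub-Gaussian, so their average is $\sigma_a^2/k$-sub-Gaussian. Hoeffding's bound then gives
\[
\Pr\Bigl(\Bigl|\tfrac1k\textstyle\sum\xi_{a,j}\Bigr|\geq u\Bigr)\leq 2\exp\!\bigl(-ku^2/(2\sigma_a^2)\bigr).
\]
A union bound over $E^c$ and the noise deviation yields the stated tail inequality.

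\textbf{MSE bound.} We use $(\alpha+\beta'+\gamma)^2\leq 3(\alpha^2+\beta'^2+\gamma^2)$ on the decomposition.
\begin{itemize}
\item The first term contributes $b_{\varphi,a}^2$.
\item The noise term contributes $\bbE[(k^{-1}\sum\xi)^2]\leq C\sigma_a^2/k$, using the sub-Gaussian variance bound.
\item For the bias term we split on $E$. On $E$ it is at most $L_a^2r_{a,k}^{2\beta}=C(k/N_a)^{2\beta/d_\varphi}$. On $E^c$ we bound it crudely by $L_a^2D_H^{2\beta}$, using the finite diameter of the support, so its contribution is at most $L_a^2D_H^{2\beta}\exp(-k/4)$.
\end{itemize}
Since each pointwise bound is uniform in $x$, averaging over the independent query $X$ gives the claimed MSE bound, with $C$ depending on $L_a,D_H,\beta,c_a,d_\varphi,\sigma_a$.
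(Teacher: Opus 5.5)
Your proposal is correct and follows essentially the same argument as the paper. It uses the same three-term decomposition on the event that the $k$-th neighbor radius is at most $r_{a,k}$, the binomial-count multiplicative Chernoff bound giving $\exp(-k/4)$, the $\sigma_a^2/k$ sub-Gaussian noise bound combined by a union bound, and an MSE bound that splits the H\"older bias on the radius event using the diameter $D_H$. The only cosmetic difference is that you use $(u+v+w)^2\leq 3(u^2+v^2+w^2)$, whereas the paper uses a conditional bias--variance split together with the moment bound $\bbE[R_{a,k}(h)^{2\beta}]\leq r_{a,k}^{2\beta}+D_H^{2\beta}\exp(-k/4)$.
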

The constant $C$ depends only on the constants in the assumptions.

The term $b_{\varphi,a}$ is zero when the embedding retains all information needed for the conditional expected outcome, so $f_0(a,x)=m_a(\varphi(x))$. Otherwise, it records the difference between conditioning on $X$ and conditioning on its embedding. The local-mass condition requires enough observations under action $a$ near each query. It complements the positivity condition used for causal identification.

The displayed estimator is a benchmark for the expected-outcome RAG. Suppose that the RAG output satisfies
\[
\max_{a\in\cb{0,1}}\sup_{x\in\calX}
|\widehat f(a,x)-\widetilde f_k(a,x)|
\leq
\eta_n.
\]
Suppose that the margin condition \eqref{eq:binary-margin} holds. For the deviation-based bound below, also suppose that $|\tau_0(X)|\leq B_\tau$ holds almost surely. Let $N_{\min}=\min\cb{N_0,N_1}$ and $b_\varphi=b_{\varphi,0}+b_{\varphi,1}$. Combining Theorem~\ref{thm:nn-expected-outcome} with Theorem~\ref{thm:binary-regret} gives
\begin{align}
R(\widehat a)
&\leq
C\left(
 b_\varphi
 +
 \left(\frac{k}{N_{\min}}\right)^{\beta/d_\varphi}
 +
 k^{-1/2}
 +
 2\eta_n
\right)^{1+\kappa}
+
2B_\tau\exp(-k/4),
\label{eq:nn-sharp-regret}
\end{align}
provided that the term in parentheses is sufficiently small relative to $t_0$. The MSE statement gives the alternative bound
\begin{equation}
\label{eq:nn-mse-regret}
R(\widehat a)
\leq
C\left(
 b_\varphi^2
 +
 \left(\frac{k}{N_{\min}}\right)^{2\beta/d_\varphi}
 +
 \frac{1}{k}
 +
 \eta_n^2
\right)^{(1+\kappa)/(2+\kappa)}.
\end{equation}
Choosing $k\asymp N_{\min}^{2\beta/(2\beta+d_\varphi)}$ balances the nearest-neighbor bias and variance. In \eqref{eq:nn-sharp-regret}, this gives the root prediction rate $N_{\min}^{-\beta/(2\beta+d_\varphi)}$ before the margin exponent is applied. In \eqref{eq:nn-mse-regret}, the term $\exp(-k/4)$ is absorbed into $1/k$.

For a general document corpus without observed outcomes, Theorem~\ref{thm:nn-expected-outcome} does not apply directly. In that case, the theorem describes the matching calculation that the RAG output is meant to approximate, and a regret rate requires an assumption or an evaluation of the difference represented by $\eta_n$.

\subsection{Matching Weights and Propensity Scores}
Similarity is used to select comparable evidence, not to choose the action that appears most often in nearby cases. Let $\mathcal N_k^H(h)$ denote the indices of the $k$ nearest observations
to $h$ in the pooled embedding database. If
\[
\widehat e_k^H(a\mid h)
=
\frac{1}{k}\sum_{j\in\mathcal N_k^H(h)}
\mathbbm{1}\sqb{A_j=a},
\]
then $\widehat e_k^H(a\mid h)$ estimates
$\Pr(A=a\mid H=h)$ under standard nearest-neighbor conditions.
At $h=\varphi(x)$, this quantity equals $\Pr(A=a\mid X=x)$ only under
an additional condition such as $A\perp X\mid H$. Maximizing this quantity selects the most common past action near $x$, not the action with the largest conditional expected outcome. The joint distribution of $(X,A)$ can remain unchanged while the conditional outcomes are changed so that either action is optimal. Thus, observations of contexts and actions alone do not determine the optimal policy.

Nearest-neighbor matching also has a weighting representation. Let $X_1^{\mathrm{tar}},\dots,X_m^{\mathrm{tar}}$ be target contexts and define
\[
\widetilde f_k(a,X_i^{\mathrm{tar}})
=
\frac{1}{k}\sum^{N_a}_{j=1}
\mathbbm{1}\sqb{j\in\mathcal N_{k,a}(\varphi(X_i^{\mathrm{tar}}))}Y_{a,j}.
\]
Define
\[
K_{a,j}
\coloneqq
\sum^m_{i=1}\mathbbm{1}\sqb{j\in\mathcal N_{k,a}(\varphi(X_i^{\mathrm{tar}}))}.
\]
Then, changing the order of summation gives
\[
\frac{1}{m}\sum^m_{i=1}\widetilde f_k(a,X_i^{\mathrm{tar}})
=
\sum^{N_a}_{j=1}\frac{K_{a,j}}{mk}Y_{a,j}.
\]
The match count therefore acts as a weight when local predictions are averaged over target contexts. For the propensity-score interpretation, suppose that the action-specific
databases are sampled from the same observational population, so
$Q_a=P_{H\mid A=a}$, and that the target embeddings follow $P_H$.
Bayes' rule then gives
\[
\frac{p_H(h)}{p_{H\mid A=a}(h)}
=
\frac{\Pr(A=a)}{\Pr(A=a\mid H=h)}.
\]
The density ratio is therefore proportional to an inverse propensity score
defined conditional on the embedding.
The density ratio is proportional to an inverse propensity score. \citet{Lin2023estimationbased} studies the density-ratio limit of nearest-neighbor match counts, and \citet{Kato2025nearestneighbor} relates nearest-neighbor matching to least-squares density-ratio estimation and Riesz regression. These results connect matching with IPW and AIPW representations when outcomes are available. They do not make the propensity score an expected outcome.

\section{Discussion}
\label{sec:discussion}

\subsection{Identification}
The regret analysis is written in terms of $f_0(a,x)=\bbE\sqb{Y\mid A=a,X=x}$. A causal interpretation requires consistency, conditional exchangeability, and positivity. Under these conditions, it holds that
\[
\bbE\sqb{Y\mid A=a,X=x}
=
\bbE\sqb{Y(a)\mid X=x}.
\]
The information contained in $X$ is therefore part of the identification argument. It is natural to write $X=(Q,W)$, where $Q$ is the query and $W$ contains variables observed before the action, such as age, gender, occupation, and prior decisions. If a variable affects both past action selection and the outcome, omitting it can prevent the conditional comparison from having a causal interpretation. Including more variables does not establish conditional exchangeability by itself, and post-action variables should not be used as controls.

Vector search is usually performed on an embedding $H=\varphi(X)$. Matching on $H$ is sufficient only when the representation retains the information required for adjustment or for the conditional expected outcomes. The GenAI-Powered Inference framework combines structured covariates with low-dimensional features extracted from unstructured data \citep{Imai2025genai}. This suggests keeping personal attributes explicit while representing the query text through an embedding. Embedding-powered BISG provides an example in which embeddings are used to estimate a missing personal attribute probabilistically \citep{Dasanaike2026usingembedding}. Such an estimate is a probabilistic proxy rather than an observed covariate, so its estimation error and possible distribution shift remain relevant. The variables used for adjustment can also be richer than the variables allowed in the final policy. In that case, expected outcomes are first adjusted using the larger information set and then averaged over variables excluded from the policy.

Instrumental variables provide another identification strategy when conditional exchangeability fails in the data used to estimate expected outcomes. They enter through the source-data model, not by treating the RAG output as an instrument. Existing studies consider optimal treatment regimes under additional IV assumptions \citep{Cui2021asemiparametric,Qiu2021optimalindividualized} and decision rules under partial identification \citep{Pu2021estimatingoptimal}. These problems have different target values. Applying RAG-PL to them would require the expected outcome supplied to the RAG system to be defined for the corresponding target.

\subsection{Expected Outcome and Policy-Value Formulations}
When $\widehat f(a,x)$ is available and the policy class contains all measurable selectors, maximizing $\widehat f(a,x)$ at each context is equivalent to maximizing $\bbE\sqb{\widehat f(s(X),X)}$ over the policy class. The same equivalence holds for an empirical average over target contexts. A restricted policy class can introduce an approximation loss, but the criterion is still constructed from the same expected-outcome estimate. The relation between EWM and least-squares CATE estimation for a reparameterized binary policy class is studied by \citet{Kato2025bridginggap}.

A different analysis is needed when the policy value is constructed from observed outcomes rather than from $\widehat f$ alone. IPW uses weighted outcomes, and AIPW combines an expected-outcome estimate with an outcome-residual correction \citep{Kitagawa2018whoshould,Athey2021policylearning}. In the main setting of this study, the analyst starts from a trained RAG system and may not observe unit-level outcomes and assignment probabilities. The fact that outcome-related text may have appeared during pretraining does not provide the data required to construct IPW or AIPW. If a retrieval database contains unit-level observations of $(X,A,Y)$, IPW and AIPW can instead be constructed directly from those data. For one-step RAG-PL, the internal criterion is not observed, so we evaluate the returned policy without assigning it to one of these formulations.

\subsection{Margin Conditions}
The margin condition is separate from identification. Identification determines whether the contrast has a causal interpretation, while the margin condition controls how often estimation error changes the preferred action. In the binary case, it bounds the probability that $|\tau_0(X)|$ is close to zero. When few contexts lie near this boundary, the regret decreases faster than the expected-outcome error.

The margin condition depends on the information used to define the conditional expected outcomes. Omitting a relevant attribute can average positive and negative effects into a contrast near zero, while adding an attribute can also reveal finer effects that lie near zero. Replacing $X$ by an embedding can change the margin for the same reason. For two-step RAG-PL, the margin within $g(x)$ is also distinct from the quality of the generated set. A large gap among poor candidates does not remove $R_{\mathrm{gen}}(g)$.

\subsection{Action Generation in Large Action Spaces}
When $\calA$ is finite and given, $R_{\mathrm{gen}}(g)=0$. For a large or uncountable action space, the generated set determines which actions can be compared. Exact inclusion of $a^*(x)$ is stronger than necessary; Proposition~\ref{prop:candidate-bounds} uses an $\varepsilon$-optimal action, a coverage probability, or distance in the action space. These conditions describe the candidate set through the best conditional expected outcome it contains.

Increasing the number of candidates can improve the chance of including a good action, but more expected outcomes must then be estimated or ranked. Equation~\eqref{eq:two-step-decomposition} keeps these effects separate. A useful candidate-set size depends on both the quality of generation and the accuracy of the second step.

\section{Simulation Studies}
\label{sec:simulation}
We evaluate whether action-specific retrieval of cases with similar pre-action information improves policy choice. Each data-generating process first produces a structured pre-action state together with the conditional expected outcome under every available action. We convert the structured states into complete English queries and case reports before executing the notebooks. The RAG procedures observe only the resulting texts, while the numerical state is retained to calculate the optimal action and regret. Because each numerical level is mapped deterministically to a verbal category, the text retains the full discrete state information. Appendix~\ref{app:simulation-designs} gives the full specifications.

\paragraph{Data-generating processes.}
DGP~1 and DGP~2 use two actions. Action~0 is temporary workload reduction and action~1 is weekly individual coaching. Let $X=(X_1,\ldots,X_6)$ describe recent performance; workload; task complexity; experience; schedule flexibility; and team support. Each component is independently and uniformly distributed on $\cb{-2,-1,0,1,2}$. DGP~1 has a linear conditional effect and observational action assignment. DGP~2 has randomized action assignment and a smooth nonlinear conditional effect whose population mean is zero. The preferred action therefore varies with the current state even though neither constant action is favored on average. DGP~3 contains 24 interventions formed from four intervention types with three intensity levels and two durations. Its conditional expected outcome rewards agreement between the employee's stated needs and the intervention profile.

\paragraph{Methods and implementation.}
The main results use GPT-5.4~mini with temperature zero. Retrieval vectors are obtained from \texttt{text-embedding-3-small}. The embedding input contains only a fixed-order description of the pre-action information. It excludes the historical action and the observed outcome. It also excludes the decision question. Cosine distance is used after normalization.

In DGP~1 and DGP~2, One-step RAG-PL and Two-step RAG-PL receive the same current query and the same six nearest reports under each action. The twelve reports are presented in alternating action order. The prompt also gives the exact mean of the six displayed scores under each action. Providing these means allows the model to use exact local sample averages, so the experiment does not test whether it can calculate those averages from the reports. One-step RAG-PL returns an action directly. Two-step RAG-PL returns the two action-specific conditional expected outcomes, and the experiment code selects the larger estimate. RAG without covariates receives neither the current profile nor historical pre-action profiles. It uses six randomly sampled action-and-outcome reports under each action and returns one population-level decision for the retrieval database. That decision is applied to all ten test queries in the repetition. RAG without covariates is therefore a population-level baseline that differs from RAG-PL in both the current-state information supplied and the retrieval rule.
In DGP~3, both RAG-PL procedures receive the same twelve nearest reports and the complete catalog of 24 actions. One-step RAG-PL returns one action directly. Two-step RAG-PL generates five distinct candidate actions. It retrieves five reports observed under each candidate and estimates the five candidate-specific conditional expected outcomes. It then selects the candidate with the largest estimate.

\paragraph{Evaluation.}
Each DGP is repeated 20 times. Each repetition contains ten test queries. DGP~1 and DGP~2 use 500 historical reports, while DGP~3 uses 600. For each repetition, we first average regret and policy value over its ten queries. Table~\ref{tab:simulation-main} reports the mean of these repetition-level quantities. The regret standard error is calculated across the 20 repetitions. The optimal-action probability counts any action attaining the largest true conditional expected outcome as optimal. Figures~\ref{fig:dgp1-regret}--\ref{fig:dgp3-regret} show the repetition-level mean regret.

\begin{table}[t]
\centering
\caption{Main simulation results with GPT-5.4 mini}
\label{tab:simulation-main}
\small
\begin{tabular}{lrrrr}
\toprule
Method & Regret & Standard error & Policy value & Optimal-action probability \\
\midrule
\multicolumn{5}{l}{\textbf{DGP 1}} \\
One-step RAG-PL & \textbf{3.3700} & 0.4200 & \textbf{66.6925} & 0.5500 \\
Two-step RAG-PL & \textbf{3.3700} & 0.4266 & \textbf{66.6925} & \textbf{0.5550} \\
RAG without covariates & 4.5450 & 0.3943 & 65.5175 & 0.4100 \\
\midrule
\multicolumn{5}{l}{\textbf{DGP 2}} \\
One-step RAG-PL & \textbf{1.5676} & 0.1564 & \textbf{65.7886} & \textbf{0.6400} \\
Two-step RAG-PL & 1.6071 & 0.1746 & 65.7491 & \textbf{0.6400} \\
RAG without covariates & 2.3638 & 0.2192 & 64.9924 & 0.5100 \\
\midrule
\multicolumn{5}{l}{\textbf{DGP 3}} \\
One-step RAG-PL & 1.4985 & 0.1134 & 65.1192 & 0.0950 \\
Two-step RAG-PL & \textbf{1.2154} & 0.0837 & \textbf{65.4023} & \textbf{0.1400} \\
\bottomrule
\end{tabular}
\end{table}

\begin{figure}[t]
\centering
\includegraphics[draft=false,width=0.82\linewidth]{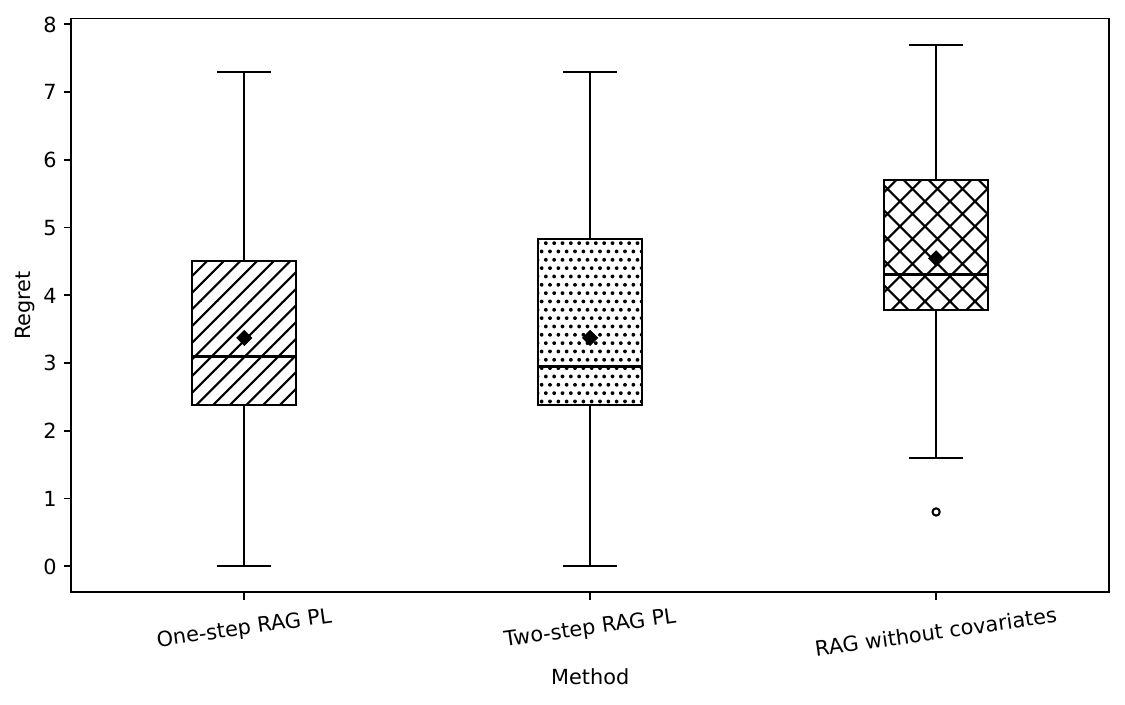}
\caption{Repetition-level mean regret in DGP~1. The diamond denotes the mean across repetitions.}
\label{fig:dgp1-regret}
\end{figure}

\begin{figure}[t]
\centering
\includegraphics[draft=false,width=0.82\linewidth]{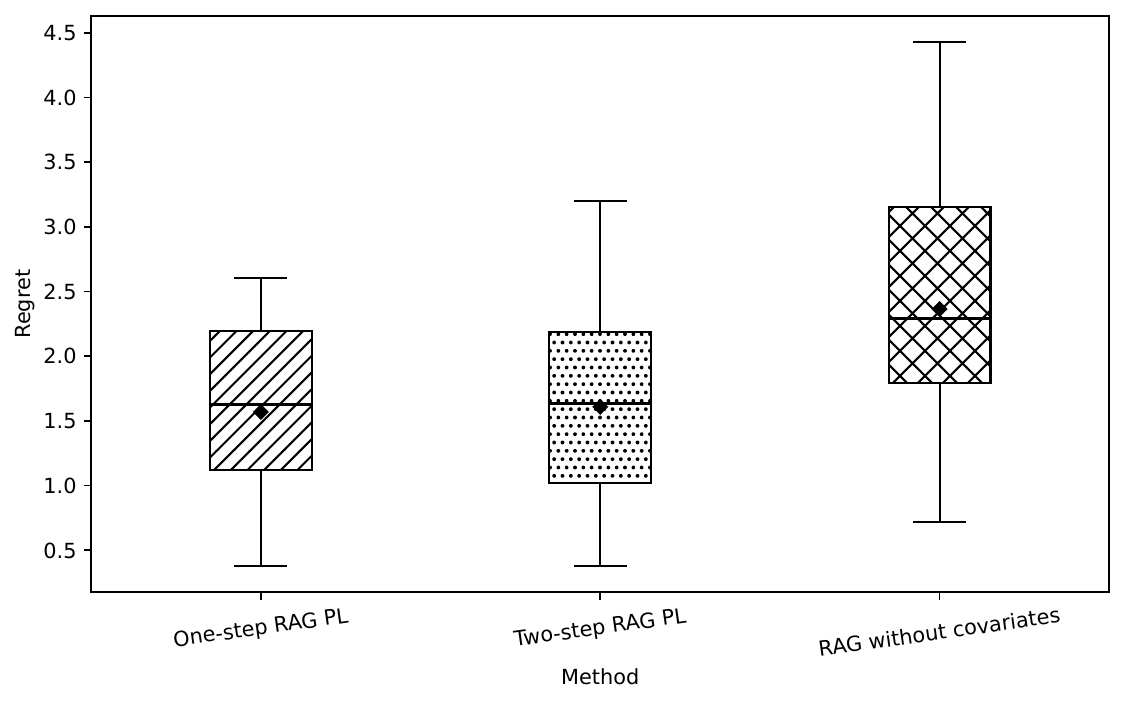}
\caption{Repetition-level mean regret in DGP~2. The diamond denotes the mean across repetitions.}
\label{fig:dgp2-regret}
\end{figure}

\begin{figure}[t]
\centering
\includegraphics[draft=false,width=0.76\linewidth]{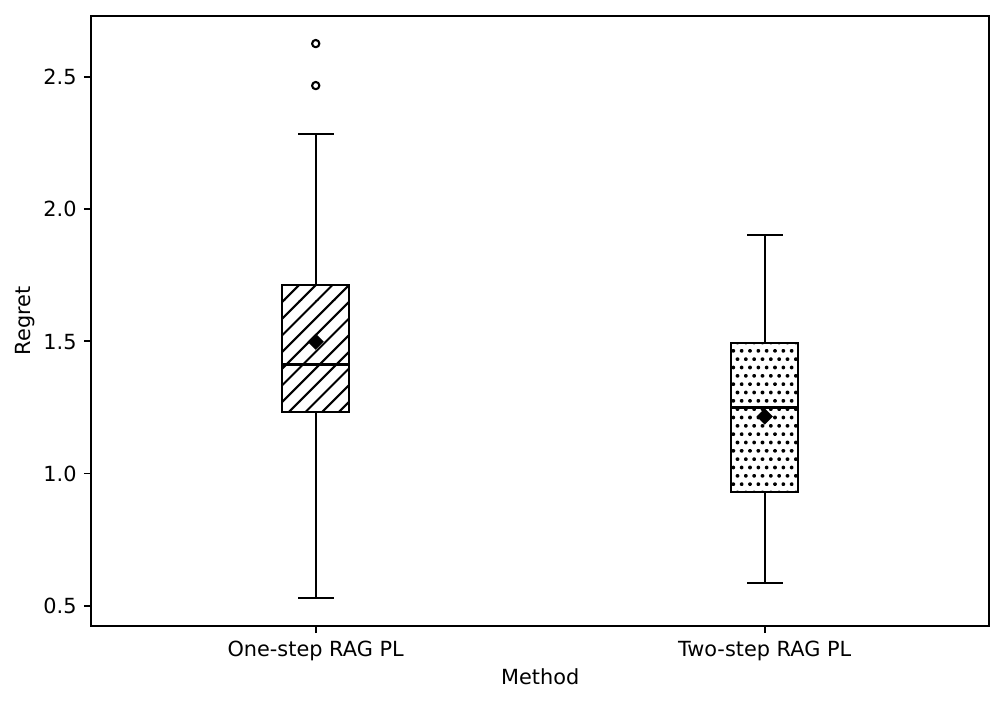}
\caption{Repetition-level mean regret in DGP~3. The diamond denotes the mean across repetitions.}
\label{fig:dgp3-regret}
\end{figure}

\paragraph{Results.}
In DGP~1, One-step RAG-PL and Two-step RAG-PL both reduce mean regret from $4.5450$ under RAG without covariates to $3.3700$. The paired mean-regret reduction is $1.1750$ for each procedure, with standard errors of $0.3562$ for One-step RAG-PL and $0.3635$ for Two-step RAG-PL. The optimal-action probability rises from $0.4100$ under RAG without covariates to $0.5500$ under One-step RAG-PL and $0.5550$ under Two-step RAG-PL.

DGP~2 isolates personalization because historical actions are randomized. One-step RAG-PL has mean regret $1.5676$ and Two-step RAG-PL has mean regret $1.6071$, compared with $2.3638$ for RAG without covariates. The paired reductions relative to RAG without covariates are $0.7963$ with standard error $0.2257$ and $0.7567$ with standard error $0.2473$, respectively. Both RAG-PL procedures select an optimal action with probability $0.6400$, compared with $0.5100$ under RAG without covariates.

In DGP~3, Two-step RAG-PL lowers mean regret from $1.4985$ to $1.2154$. The paired difference, Two-step RAG-PL minus One-step RAG-PL, is $-0.2831$ with standard error $0.1431$. The regret decomposition for Two-step RAG-PL gives candidate-set regret $0.4822$ and within-candidate regret $0.7332$. Thus, both candidate generation and selection among the generated candidates contribute to the remaining regret. The generated set contains an exactly optimal action with probability $0.4450$ and an action within $0.5$ of the optimum with probability $0.6700$. Additional estimation and robustness results are reported in Appendix~\ref{app:simulation-results}.

\section{Conclusion}
We proposed one-step and two-step RAG-based policy learning methods. We interpreted the vector-search component of RAG as nearest-neighbor matching and related this matching procedure to propensity-score weighting, density-ratio estimation, and Riesz regression. This interpretation connects RAG-based decision-making to the existing literature on policy learning. We then derived regret upper bounds for the proposed methods by combining policy-learning analysis with nonparametric prediction-error bounds for nearest-neighbor estimators and transformers. In the simulation studies, both RAG-PL methods had lower mean regret than RAG without covariates in the binary settings, and the two-step method had lower mean regret than the one-step method in the 24-action setting.

\bibliography{arXiv2.bbl}

\bibliographystyle{tmlr}

\onecolumn

\appendix

\section{Proofs for the Regret Analysis}
\label{app:regret-proofs}

\subsection{Regret decomposition and candidate-set bounds}
For every $x\in\calX$, adding and subtracting $f_0(a_g^*(x),x)$ gives
\begin{align*}
&f_0(a^*(x),x)-f_0(\widehat a_{\mathrm{two}}(x),x)\\
&=f_0(a^*(x),x)-f_0(a_g^*(x),x)
+f_0(a_g^*(x),x)-f_0(\widehat a_{\mathrm{two}}(x),x).
\end{align*}
Taking expectations proves \eqref{eq:two-step-decomposition}.

\begin{proof}[Proof of Proposition~\ref{prop:candidate-bounds}]
For the first statement, suppose that $g(x)\cap\calA_\varepsilon^*(x)\neq\varnothing$. Then, there is an action $a\in g(x)$ such that
\[
f_0(a^*(x),x)-f_0(a,x)\leq\varepsilon.
\]
Since $a_g^*(x)$ maximizes $f_0(a,x)$ over $g(x)$, it holds that $f_0(a_g^*(x),x)\geq f_0(a,x)$. Therefore,
\[
f_0(a^*(x),x)-f_0(a_g^*(x),x)\leq\varepsilon.
\]
Taking expectations proves the first statement.

For the second statement, define
\[
E_\varepsilon
\coloneqq
\cb{g(X)\cap\calA_\varepsilon^*(X)\neq\varnothing}.
\]
The first statement gives a pointwise bound of $\varepsilon$ on $E_\varepsilon$, while the bounded-difference assumption gives a bound of $B_f$ on $E_\varepsilon^c$. Hence,
\begin{align*}
R_{\mathrm{gen}}(g)
&\leq
\varepsilon\Pr(E_\varepsilon)+B_f\Pr(E_\varepsilon^c)\\
&\leq
\varepsilon+B_f\delta.
\end{align*}

For the third statement, fix $x$. Because $g(x)$ is finite, there is an action $\widetilde a_g(x)\in g(x)$ such that
\[
d_\calA(a^*(x),\widetilde a_g(x))
=
d_\calA(a^*(x),g(x)).
\]
Since $a_g^*(x)$ maximizes $f_0$ over $g(x)$,
\begin{align*}
f_0(a^*(x),x)-f_0(a_g^*(x),x)
&\leq
f_0(a^*(x),x)-f_0(\widetilde a_g(x),x)\\
&\leq
L_\calA d_\calA(a^*(x),g(x))^{\beta_\calA}.
\end{align*}
Taking expectations completes the proof.
\end{proof}

\subsection{Expected-outcome estimation}
\begin{proof}[Proof of Theorem~\ref{thm:expected-outcome-regret}]
Fix $x\in\calX$. The definition of $\widehat a_{\mathrm{two}}(x)$ gives
\[
\widehat f(a_g^*(x),x)
-
\widehat f(\widehat a_{\mathrm{two}}(x),x)
\leq0.
\]
Therefore,
\begin{align*}
&f_0(a_g^*(x),x)-f_0(\widehat a_{\mathrm{two}}(x),x)\\
&=f_0(a_g^*(x),x)-\widehat f(a_g^*(x),x)\\
&\quad+\widehat f(a_g^*(x),x)-\widehat f(\widehat a_{\mathrm{two}}(x),x)\\
&\quad+\widehat f(\widehat a_{\mathrm{two}}(x),x)-f_0(\widehat a_{\mathrm{two}}(x),x)\\
&\leq2\delta_f(x;g).
\end{align*}
Taking expectations and using \eqref{eq:two-step-decomposition} proves the first two statements.

Suppose now that $\delta_f(X;g)\leq r_f$. If the selected action is not optimal in $g(X)$, then
\[
0<\Delta_g(X)
\leq
f_0(a_g^*(X),X)-f_0(\widehat a_{\mathrm{two}}(X),X)
\leq2r_f.
\]
It follows that
\begin{align*}
R_{\mathrm{choice}}(\widehat a_{\mathrm{two}};g)
&\leq
2r_f\Pr\p{0<\Delta_g(X)\leq2r_f}\\
&\leq
C_M(2r_f)^{1+\kappa}.
\end{align*}
\end{proof}

To derive \eqref{eq:high-prob-choice}, let $E_f=\cb{\delta_f(X;g)\leq r_f}$. On $E_f$, the preceding margin argument applies. On $E_f^c$, the value difference is at most $B_f$. Hence,
\begin{align*}
R_{\mathrm{choice}}(\widehat a_{\mathrm{two}};g)
&\leq
C_M(2r_f)^{1+\kappa}+B_f\Pr(E_f^c)\\
&\leq
C_M(2r_f)^{1+\kappa}+B_f\eta_f.
\end{align*}
The finite-candidate tail bound follows from
\begin{align*}
&\Pr\left(
\max_{a\in g(X)}|\widehat f(a,X)-f_0(a,X)|>b_n+u
\mathrel{\Big|}
X,g(X)
\right)\\
&\leq
\sum_{a\in g(X)}
\Pr\left(
|\widehat f(a,X)-f_0(a,X)|>b_n+u
\mathrel{\Big|}
X,g(X)
\right)\\
&\leq
c_1M\exp(-c_2a_nu^2).
\end{align*}

\subsection{Relation between the one-step and two-step methods}
Suppose that $\widehat a_{\mathrm{one}}(x)\in g(x)$. Then,
\[
f_0(a_g^*(x),x)
\geq
f_0(\widehat a_{\mathrm{one}}(x),x),
\]
so
\[
f_0(a^*(x),x)-f_0(a_g^*(x),x)
\leq
f_0(a^*(x),x)-f_0(\widehat a_{\mathrm{one}}(x),x).
\]
Taking expectations gives $R_{\mathrm{gen}}(g)\leq R(\widehat a_{\mathrm{one}})$. The remaining inequalities in
Section~\ref{sec:regret-analysis} follow from \eqref{eq:two-step-decomposition} and Theorem~\ref{thm:expected-outcome-regret}.

\subsection{Proof of Theorem~\ref{thm:binary-regret}}
The identity \eqref{eq:binary-regret-identity} follows pointwise. If $\widehat a(x)=a^*(x)$, the loss is zero. Otherwise, the selected action has the lower conditional expected outcome, and the difference is $|\tau_0(x)|$.

For the first statement, suppose that $\widehat a(x)\neq a^*(x)$ and $\tau_0(x)\neq0$. Then, $\tau_0(x)\widehat\tau(x)\leq0$, which implies
\[
0<|\tau_0(x)|
\leq
|\widehat\tau(x)-\tau_0(x)|
\leq
r_\tau.
\]
Using \eqref{eq:binary-regret-identity} and the margin condition,
\begin{align*}
R(\widehat a)
&\leq
\bbE\sqb{|\tau_0(X)|\mathbbm{1}\sqb{0<|\tau_0(X)|\leq r_\tau}}\\
&\leq
r_\tau\Pr\p{0<|\tau_0(X)|\leq r_\tau}\\
&\leq
C_Mr_\tau^{1+\kappa}.
\end{align*}
If each expected-outcome error is at most $r_f$, then
\[
|\widehat\tau(x)-\tau_0(x)|
\leq
|\widehat f(1,x)-f_0(1,x)|
+
|\widehat f(0,x)-f_0(0,x)|
\leq2r_f.
\]

For the second statement, put $T=|\tau_0(X)|$. On a sign error with $T>0$, it holds that $|\widehat\tau(X)-\tau_0(X)|\geq T$. The contribution from $0<T\leq2b_n$ is at most
\[
2b_n\Pr\p{0<T\leq2b_n}
\leq
C_M(2b_n)^{1+\kappa}.
\]
On $T>2b_n$, the assumed deviation inequality gives, conditional on $X$,
\[
\Pr\p{\widehat a(X)\neq a^*(X)\mid X}
\leq
c_1\exp(-c_2a_nT^2/4)+\delta_n.
\]
Let $s=a_n^{-1/2}$ and $c=c_2/4$. The region $0<T\leq s$ contributes at most $C_Ms^{1+\kappa}$. For $j\geq0$, define
\[
S_j
\coloneqq
\cb{2^js<T\leq2^{j+1}s}.
\]
Whenever $2^{j+1}s\leq t_0$,
\begin{align*}
\bbE\sqb{T\exp(-ca_nT^2)\mathbbm{1}\sqb{S_j}}
&\leq
2^{j+1}s\exp(-c4^j)\Pr\p{0<T\leq2^{j+1}s}\\
&\leq
C_Ms^{1+\kappa}2^{(j+1)(1+\kappa)}\exp(-c4^j).
\end{align*}
The series over $j$ is finite. On $T>t_0$, boundedness gives a term of order $\exp(-ca_nt_0^2)$, which is bounded by a constant multiple of $s^{1+\kappa}$ under the stated condition. The contribution of $\delta_n$ is at most $B_\tau\delta_n$. Therefore,
\[
R(\widehat a)
\leq
C_1b_n^{1+\kappa}+C_2a_n^{-(1+\kappa)/2}+B_\tau\delta_n,
\]
and the displayed result follows.

For the third statement, let $E(X)=\widehat\tau(X)-\tau_0(X)$ and choose $t\in(0,t_0]$. By the sign-error containment,
\begin{align*}
R(\widehat a)
&\leq
\bbE\sqb{|\tau_0(X)|\mathbbm{1}\sqb{0<|\tau_0(X)|\leq t}}\\
&\quad+
\bbE\sqb{|\tau_0(X)|\mathbbm{1}\sqb{\widehat a(X)\neq a^*(X),\ |\tau_0(X)|>t}}.
\end{align*}
The first term is at most $C_Mt^{1+\kappa}$. On the event in the second term, $|E(X)|\geq|\tau_0(X)|>t$, so
\[
|\tau_0(X)|
\leq
\frac{E(X)^2}{t}.
\]
Consequently,
\[
R(\widehat a)
\leq
C_Mt^{1+\kappa}+\frac{q_n}{t}.
\]
Taking $t=q_n^{1/(2+\kappa)}$ proves the result.

\section{Proofs for Nearest-Neighbor Matching}
\label{app:nn-proofs}

\subsection{Nearest-neighbor radius}
Fix an action $a$ and a query embedding $h$. Let
\[
R_{a,k}(h)
\coloneqq
\max_{j\in\mathcal N_{k,a}(h)}\lVert H_{a,j}-h\rVert.
\]
The event $R_{a,k}(h)>r_{a,k}$ occurs only if fewer than $k$ database points fall in $B(h,r_{a,k})$. The number of points in that ball is binomial with mean at least
\[
N_aQ_a(B(h,r_{a,k}))
\geq
N_ac_ar_{a,k}^{d_\varphi}
=
2k.
\]
The multiplicative Chernoff inequality therefore gives
\begin{equation}
\label{eq:nn-radius-tail}
\Pr\p{R_{a,k}(h)>r_{a,k}}
\leq
\exp(-k/4).
\end{equation}
Since the support has diameter $D_H$,
\[
\bbE\sqb{R_{a,k}(h)^{2\beta}}
\leq
r_{a,k}^{2\beta}+D_H^{2\beta}\exp(-k/4).
\]

\subsection{Proof of Theorem~\ref{thm:nn-expected-outcome}}
On the event $R_{a,k}(h)\leq r_{a,k}$, decompose
\begin{align*}
\widetilde f_k(a,x)-f_0(a,x)
={}&
\frac{1}{k}\sum_{j\in\mathcal N_{k,a}(h)}\cb{m_a(H_{a,j})-m_a(h)}\\
&+
\frac{1}{k}\sum_{j\in\mathcal N_{k,a}(h)}\xi_{a,j}\\
&+
m_a(h)-f_0(a,x).
\end{align*}
The absolute values of the first and third terms are bounded by $L_ar_{a,k}^\beta$ and $b_{\varphi,a}$, respectively. Conditional on the embeddings, the middle term is mean zero and sub-Gaussian with variance proxy $\sigma_a^2/k$. Hence,
\[
\Pr\left(
\left|\frac{1}{k}\sum_{j\in\mathcal N_{k,a}(h)}\xi_{a,j}\right|\geq u
\mathrel{\Big|}
\cb{H_{a,j}}_{j=1}^{N_a}
\right)
\leq
2\exp\left(-\frac{ku^2}{2\sigma_a^2}\right).
\]
Combining this inequality with \eqref{eq:nn-radius-tail} proves the pointwise deviation bound.

For the MSE bound, condition on the embeddings. The conditional variance is at most $\sigma_a^2/k$. The squared conditional bias is bounded by a constant multiple of
\[
b_{\varphi,a}^2+L_a^2R_{a,k}(h)^{2\beta}.
\]
Taking expectations and using the radius moment bound gives
\[
\bbE\sqb{(\widetilde f_k(a,x)-f_0(a,x))^2}
\leq
C\left(
 b_{\varphi,a}^2
 +
 \left(\frac{k}{N_a}\right)^{2\beta/d_\varphi}
 +
 \frac{1}{k}
 +
 \exp(-k/4)
\right).
\]
Integrating over $X$ proves the stated MSE result.

\subsection{Derivation of the regret bounds}
Let
\[
\widetilde\tau_k(x)
=
\widetilde f_k(1,x)-\widetilde f_k(0,x).
\]
A union bound and Theorem~\ref{thm:nn-expected-outcome} give, for every $x$ and $u>0$,
\begin{align*}
&\Pr\left(
|\widetilde\tau_k(x)-\tau_0(x)|
\geq
b_\varphi+L_0r_{0,k}^\beta+L_1r_{1,k}^\beta+u
\right)\\
&\leq
4\exp\left(-\frac{ku^2}{8\sigma_{\max}^2}\right)
+2\exp(-k/4),
\end{align*}
where $\sigma_{\max}=\max(\sigma_0,\sigma_1)$. If $\widehat f$ differs from $\widetilde f_k$ by at most $\eta_n$ for each action, then
\[
|\widehat\tau(x)-\widetilde\tau_k(x)|\leq2\eta_n.
\]
The second statement of Theorem~\ref{thm:binary-regret} applies with a deterministic term of order
\[
b_\varphi
+
\left(\frac{k}{N_{\min}}\right)^{\beta/d_\varphi}
+
2\eta_n,
\]
a concentration scale $k^{-1/2}$, and a failure probability $2\exp(-k/4)$. This proves \eqref{eq:nn-sharp-regret}.

For the MSE bound, the inequality $(u+v)^2\leq2u^2+2v^2$ gives
\[
\bbE\sqb{(\widehat\tau(X)-\tau_0(X))^2}
\leq
C\left(
 b_\varphi^2
 +
 \left(\frac{k}{N_{\min}}\right)^{2\beta/d_\varphi}
 +
 \frac{1}{k}
 +
 \eta_n^2
 +
 \exp(-k/4)
\right).
\]
The third statement of Theorem~\ref{thm:binary-regret} proves \eqref{eq:nn-mse-regret}. Balancing $(k/N_{\min})^{\beta/d_\varphi}$ and $k^{-1/2}$ gives $k\asymp N_{\min}^{2\beta/(2\beta+d_\varphi)}$.

\section{Causal Identification with Covariates and Embeddings}
\label{app:identification}
This section records the conditions under which the observational conditional mean can be interpreted as a potential-outcome conditional mean. We use a discrete action space because that is the setting needed for the matching discussion.

\begin{proposition}[Identification with observed covariates]
\label{prop:identification-x}
Suppose that consistency holds, so $Y=Y(A)$. Suppose also that
\[
\cb{Y(a):a\in\calA}
\perp
A
\mid
X
\]
and that $\Pr(A=a\mid X=x)>0$ on the covariate support of interest. Then,
\[
f_0(a,x)
=
\bbE\sqb{Y(a)\mid X=x}.
\]
\end{proposition}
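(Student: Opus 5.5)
The argument is the standard three-step identification chain, carried out pointwise for a fixed action $a$ and a covariate value $x$ in the support of interest. First, positivity $\Pr(A=a\mid X=x)>0$ is used only to make the conditioning event $\{A=a,X=x\}$ meaningful, so that $f_0(a,x)=\bbE\sqb{Y\mid A=a,X=x}$ is well defined. For continuous $X$ the statement is read as an equality of regular conditional expectations, valid for $P_X$-almost every $x$ in that support. Second, consistency $Y=Y(A)$ implies $Y=Y(a)$ on the event $\{A=a\}$, and hence
\[
\bbE\sqb{Y\mid A=a,X=x}=\bbE\sqb{Y(a)\mid A=a,X=x}.
\]
Third, conditional exchangeability $\cb{Y(a):a\in\calA}\perp A\mid X$ implies in particular $Y(a)\perp A\mid X$. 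Therefore the conditional distribution of $Y(a)$ given $(A,X)=(a,x)$ coincides with its conditional distribution given $X=x$, which yields
\[
\bbE\sqb{Y(a)\mid A=a,X=x}=\bbE\sqb{Y(a)\mid X=x}.
\]
Chaining the two displays gives the claim.

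To make the second step rigorous without pointwise conditioning, I would argue with indicators. For any bounded measurable $h$, one has $\bbE\sqb{Y\mathbbm{1}\sqb{A=a}h(X)}=\bbE\sqb{Y(a)\mathbbm{1}\sqb{A=a}h(X)}$, because the integrands agree pathwise by consistency. Conditioning on $X$ and applying conditional independence then factorizes the right-hand side as
\[
\bbE\sqb{\bbE\sqb{Y(a)\mid X}\Pr(A=a\mid X)h(X)}.
\]
Dividing by $\Pr(A=a\mid X)$, which positivity permits, identifies $\bbE\sqb{Y(a)\mid X}$ as a version of $\bbE\sqb{Y\mid A=a,X}$ on $\{A=a\}$.

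The only delicate point is this measure-theoretic bookkeeping: making sure that the version of $f_0(a,\cdot)$ is pinned down on the covariate support where positivity holds. I would also assume integrability of $Y(a)$ implicitly. Since $\calA$ is discrete, the event $\{A=a\}$ has positive probability given $X$, so no further regularity should be needed.
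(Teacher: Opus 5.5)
Your proposal is correct and follows the same chain as the paper: consistency replaces $Y$ by $Y(a)$ on $\{A=a\}$, conditional exchangeability removes the conditioning on $A=a$, and positivity only ensures that $\bbE\sqb{Y\mid A=a,X=x}$ is defined on the support of interest. Your indicator-based argument spells out steps the paper leaves implicit. Strictly, no division is needed there: the factorization already shows $\bbE\sqb{Y(a)\mid X}$ is a version of the conditional mean, and positivity serves to transfer that $P_{X\mid A=a}$-almost-sure identity to $P_X$-almost every $x$.
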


\begin{proof}
By consistency, $Y=Y(a)$ on the event $A=a$. Conditional exchangeability gives
\begin{align*}
f_0(a,x)
&=\bbE\sqb{Y\mid A=a,X=x}\\
&=\bbE\sqb{Y(a)\mid A=a,X=x}\\
&=\bbE\sqb{Y(a)\mid X=x}.
\end{align*}
Positivity ensures that the conditional mean given $A=a$ is defined on the target support.
\end{proof}

Let $H=\varphi(X)$. The standard balancing-score argument gives the following result \citep{Rosenbaum1983centralrole}.

\begin{proposition}[Identification with a balancing representation]
\label{prop:balancing-representation}
Suppose that the conditions of Proposition~\ref{prop:identification-x} hold and that $A\perp X\mid H$. Then, $Y(a)\perp A\mid H$ for every action $a$. If $\Pr(A=a\mid H=h)>0$, then
\[
\bbE\sqb{Y(a)\mid H=h}
=
\bbE\sqb{Y\mid A=a,H=h}.
\]
\end{proposition}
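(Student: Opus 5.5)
We prove the two claims of Proposition~\ref{prop:balancing-representation} in turn: first the conditional independence $Y(a)\perp A\mid H$, then the identification formula. For the first claim, the natural route is the balancing-score argument of \citet{Rosenbaum1983centralrole}. It suffices to show that, for every action $a'$ and every $h$ in the support of $H$,
\[
\Pr\bigl(A=a'\mid Y(a),H=h\bigr)=\Pr\bigl(A=a'\mid H=h\bigr).
\]
Because $H=\varphi(X)$ is a measurable function of $X$, we have $\sigma(H)\subseteq\sigma(X)$, and the tower property lets us condition on $X$ inside:
\[
\Pr\bigl(A=a'\mid Y(a),H\bigr)
=
\bbE\sqb{\Pr\bigl(A=a'\mid Y(a),X\bigr)\mid Y(a),H}.
\]
Conditional exchangeability from Proposition~\ref{prop:identification-x} replaces the inner term by $\Pr(A=a'\mid X)$. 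The assumption $A\perp X\mid H$ then gives $\Pr(A=a'\mid X)=\Pr(A=a'\mid H)$, which is $\sigma(H)$-measurable. The outer conditional expectation therefore returns $\Pr(A=a'\mid H)$, and this establishes the claimed conditional independence.

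For the identification formula, we argue exactly as in the proof of Proposition~\ref{prop:identification-x}, with $H$ in place of $X$. Consistency gives $Y=Y(a)$ on $\{A=a\}$, so
\[
\bbE\sqb{Y\mid A=a,H=h}=\bbE\sqb{Y(a)\mid A=a,H=h}.
\]
The first claim then drops the conditioning on $A=a$, yielding $\bbE\sqb{Y(a)\mid H=h}$. The positivity assumption $\Pr(A=a\mid H=h)>0$ ensures that the conditional mean given $A=a,H=h$ is well defined at $h$.

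The main obstacle is measure-theoretic rather than conceptual. One has to make sure that the conditional probabilities are versions that can be evaluated at points $h$, and that discreteness of $\calA$ justifies working with the events $\{A=a'\}$. I would state the argument at the level of conditional expectations holding almost surely, and note that discreteness reduces $A\perp Y(a)\mid H$ to the displayed identity for each $a'$. Pointwise statements then hold for $P_H$-almost every $h$ in the positivity region, and I would use that reading throughout.
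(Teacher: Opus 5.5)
Your proof is correct and follows essentially the same route as the paper: pass through $X$ using $\sigma(H)\subseteq\sigma(X)$ and the tower property, apply exchangeability given $X$ and then $A\perp X\mid H$, and finish the identity with consistency and positivity. The only difference is that you check the conditional independence from the action side, $\Pr(A=a'\mid Y(a),H)=\Pr(A=a'\mid H)$, in Rosenbaum and Rubin's original form, whereas the paper checks the mirror-image condition $\bbE[u(Y(a))\mid A,H]=\bbE[\bbE[u(Y(a))\mid X]\mid H]$ for bounded $u$; for that first claim the paper's form does not use discreteness of $A$.
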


\begin{proof}
For every bounded measurable function $u$, conditional exchangeability given $X$ implies
\[
\bbE\sqb{u(Y(a))\mid A,X,H}
=
\bbE\sqb{u(Y(a))\mid X}.
\]
Taking the conditional expectation given $(A,H)$ and using $A\perp X\mid H$ gives
\begin{align*}
\bbE\sqb{u(Y(a))\mid A,H}
&=\bbE\sqb{\bbE\sqb{u(Y(a))\mid X}\mid A,H}\\
&=\bbE\sqb{\bbE\sqb{u(Y(a))\mid X}\mid H}.
\end{align*}
The last expression does not depend on $A$, so $Y(a)\perp A\mid H$. Consistency and positivity then give the conditional-mean identity.
\end{proof}

A balancing representation identifies an average conditional on $H$. Recovering the more detailed target $\bbE\sqb{Y(a)\mid X=x}$ from matching on $H$ requires the conditional expected outcome to depend on $X$ through $H$.

\begin{proposition}[Outcome information retained by the embedding]
\label{prop:outcome-sufficient-embedding}
Suppose that there is a function $m_a$ such that
\[
\bbE\sqb{Y(a)\mid X}
=
m_a(\varphi(X))
\]
almost surely. Under the conditions of Proposition~\ref{prop:identification-x}, it holds that
\[
f_0(a,x)=m_a(\varphi(x))
\]
for $P_X$-almost every $x$.
\end{proposition}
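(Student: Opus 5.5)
The plan is to chain Proposition~\ref{prop:identification-x} with the assumed representation of the potential-outcome regression. First, I apply Proposition~\ref{prop:identification-x} to obtain
\[
f_0(a,x)=\bbE\sqb{Y(a)\mid X=x}
\]
on the covariate support where positivity holds. Both sides are versions of conditional expectations, so this identity is understood for $P_X$-almost every $x$. Concretely, $\bbE\sqb{Y(a)\mid X}=f_0(a,X)$ almost surely as random variables.

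Second, the hypothesis gives $\bbE\sqb{Y(a)\mid X}=m_a(\varphi(X))$ almost surely. Combining the two almost-sure identities yields
\[
f_0(a,X)=m_a(\varphi(X))\quad\text{almost surely.}
\]
This is exactly the statement that the set $\cb{x:f_0(a,x)\neq m_a(\varphi(x))}$ has $P_X$-measure zero. For this set to be an event, I will need $m_a$ to be measurable, and I will take a measurable version (one always exists, by Doob--Dynkin, when $\bbE\sqb{Y(a)\mid X}$ is $\sigma(\varphi(X))$-measurable).

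The only delicate point is bookkeeping about versions and supports. $f_0(a,\cdot)$ is only defined $P_X$-a.e.\ on the region where $\Pr(A=a\mid X=x)>0$, so the conclusion should be read on that support, as in Proposition~\ref{prop:identification-x}. A secondary point is that the identity in Proposition~\ref{prop:identification-x} is stated pointwise but really holds for $P_X$-a.e.\ $x$. I will make this explicit by writing everything as equalities of random variables $f_0(a,X)$, $\bbE\sqb{Y(a)\mid X}$ and $m_a(\varphi(X))$, and only at the end translating back to a statement about $P_X$-almost every $x$.
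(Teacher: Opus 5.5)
Your proposal is correct and matches the paper's proof: Proposition~\ref{prop:identification-x} gives $f_0(a,X)=\bbE\sqb{Y(a)\mid X}$ almost surely, and substituting the hypothesis $\bbE\sqb{Y(a)\mid X}=m_a(\varphi(X))$ yields the claim. Your extra bookkeeping on versions, measurability of $m_a$, and the positivity support only makes explicit what the paper's two-line argument leaves implicit.
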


\begin{proof}
Proposition~\ref{prop:identification-x} gives $f_0(a,X)=\bbE\sqb{Y(a)\mid X}$. Substituting the stated condition proves the result.
\end{proof}

The two representation conditions have different roles. Balancing concerns adjustment for action assignment. The last proposition concerns the heterogeneity needed to retain the same conditional expected outcomes as the original covariates. The construction of \citet{Imai2026causalinference} combines structured covariates with features extracted from unstructured data. The estimates studied by \citet{Dasanaike2026usingembedding} can supply a probabilistic measurement of an unavailable attribute, but they do not imply either condition without further assumptions.

The variables used for adjustment need not all appear in the final policy. Suppose $X=(Q,W)$ and the policy is restricted to depend on $Q$. After the conditional expected outcomes are adjusted using $(Q,W)$, the value relevant to a $Q$-only policy is based on
\[
f_0^Q(a,q)
\coloneqq
\bbE\sqb{f_0(a,Q,W)\mid Q=q}.
\]
The optimal policy within this restricted information set selects an action that maximizes $f_0^Q(a,q)$.

\section{Matching Weights and Outcome Information}
\label{app:matching-weights}
Contexts and historical actions alone do not determine the optimal action. To see this, fix any joint distribution of $(X,A)$. One possible conditional-outcome model sets $f_0(1,x)=1$ and $f_0(0,x)=0$ for every $x$, while another sets $f_0(1,x)=0$ and $f_0(0,x)=1$. Both models have the same distribution of $(X,A)$, but their optimal actions are opposite. Outcome information or an external model of expected outcomes is therefore needed.

For a neighborhood $\mathcal N_k(x)$, the local action frequency
\[
\widehat e_k(a\mid x)
=
\frac{1}{k}\sum_{j\in\mathcal N_k(x)}\mathbbm{1}\sqb{A_j=a}
\]
targets the propensity score $e_0(a\mid x)=\Pr(A=a\mid X=x)$ under standard nearest-neighbor conditions. If this estimator is consistent and the propensity has a unique maximizer, selecting the largest local frequency converges to the most likely historical action. Its limiting regret is
\[
\bbE\sqb{f_0(a^*(X),X)-f_0(a_{\mathrm{hist}}(X),X)},
\qquad
a_{\mathrm{hist}}(x)\in\argmax_{a\in\calA}e_0(a\mid x),
\]
which need not be zero.

The matching-weight identity in Section~\ref{sec:nn-regret} follows directly from changing the order of summation. If
\[
K_{a,j}
=
\sum_{i=1}^m\mathbbm{1}\sqb{
j\in\mathcal N_{k,a}(\varphi(X_i^{\mathrm{tar}}))
},
\]
then
\begin{align*}
\frac{1}{m}\sum_{i=1}^m\widetilde f_k(a,X_i^{\mathrm{tar}})
&=
\frac{1}{mk}\sum_{i=1}^m\sum_{j=1}^{N_a}
\mathbbm{1}\sqb{j\in\mathcal N_{k,a}(X_i^{\mathrm{tar}})}Y_{a,j}\\
&=
\sum_{j=1}^{N_a}\frac{K_{a,j}}{mk}Y_{a,j}.
\end{align*}
Bayes' rule gives $p_X(x)/p_a(x)=\Pr(A=a)/\Pr(A=a\mid X=x)$. This is the population relation behind the connection among match counts, density-ratio estimation, and inverse propensity weighting \citep{Lin2023estimationbased,Kato2025nearestneighbor}.

\section{Policy-Value Estimation}
\label{app:ewm}

\subsection{Value induced by an expected-outcome estimate}
Let $\widehat s_f(x)\in\argmax_{a\in g(x)}\widehat f(a,x)$ be measurable. For any $s\in\mathcal S_g$,
\[
\widehat f(\widehat s_f(x),x)
\geq
\widehat f(s(x),x)
\]
for every $x$. Taking expectations gives $\widehat V_f(\widehat s_f)\geq\widehat V_f(s)$. Applying the same inequality at each target context proves the empirical version. This is the equivalence used in
Section~\ref{sec:regret-analysis}.

\subsection{Binary EWM and least-squares CATE estimation}
Let $\mu_a(x)=\bbE\sqb{Y(a)\mid X=x}$ and $\tau(x)=\mu_1(x)-\mu_0(x)$. For a deterministic binary policy $\pi$, define $g_\pi(x)=2\pi(x)-1$. Then,
\begin{align*}
V(\pi)
&=\bbE\sqb{\mu_0(X)+\pi(X)\tau(X)}\\
&=\bbE\sqb{\mu_0(X)}
+\frac{1}{2}\bbE\sqb{\tau(X)}
+\frac{1}{2}\bbE\sqb{g_\pi(X)\tau(X)}.
\end{align*}

\begin{proposition}[EWM and least-squares CATE estimation]
\label{prop:ewm-ls-equivalence}
Let $\Pi$ be a class of deterministic binary policies and let $\mathcal G_\Pi=\cb{2\pi-1:\pi\in\Pi}$. Maximizing $V(\pi)$ over $\Pi$ is equivalent, under $g=2\pi-1$, to minimizing
\[
\bbE\sqb{(\tau(X)-g(X))^2}
\]
over $\mathcal G_\Pi$. The same algebra applies to empirical criteria when $\tau(X_i)$ is replaced by a common pseudo-outcome.
\end{proposition}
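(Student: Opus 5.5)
The plan is to rewrite the policy value in terms of $g$ and then expand the square. For $\pi\in\Pi$ set $g=g_\pi=2\pi-1$. The decomposition displayed just before the statement gives
\[
V(\pi)=\bbE\sqb{\mu_0(X)}+\frac{1}{2}\bbE\sqb{\tau(X)}+\frac{1}{2}\bbE\sqb{g(X)\tau(X)}.
\]
The first two terms do not depend on $\pi$. Hence maximizing $V(\pi)$ over $\Pi$ is the same as maximizing $\bbE\sqb{g(X)\tau(X)}$ over $\mathcal G_\Pi$.

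The map $\pi\mapsto 2\pi-1$ is a bijection from $\Pi$ onto $\mathcal G_\Pi$, with inverse $g\mapsto(g+1)/2$, so the two optimization problems correspond one-to-one.

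Next I expand the squared loss:
\[
\bbE\sqb{(\tau(X)-g(X))^2}=\bbE\sqb{\tau(X)^2}-2\bbE\sqb{g(X)\tau(X)}+\bbE\sqb{g(X)^2}.
\]
The key observation is that $g(X)\in\cb{-1,1}$ because $\pi$ is deterministic and binary, so $g(X)^2=1$ and $\bbE\sqb{g(X)^2}=1$. The term $\bbE\sqb{\tau(X)^2}$ does not depend on $g$. The squared loss therefore equals a constant minus $2\bbE\sqb{g\tau}$. Minimizing it over $\mathcal G_\Pi$ is the same as maximizing $\bbE\sqb{g\tau}$, and hence the same as maximizing $V(\pi)$, with identical optimizers under the bijection.

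For the empirical statement, I will replace expectations by sample averages and $\tau(X_i)$ by a common pseudo-outcome $\widehat\Gamma_i$. Examples are an IPW or AIPW score, or $\widehat\mu_1(X_i)-\widehat\mu_0(X_i)$, with the baseline term built from the same pseudo-outcomes. The empirical value is then $c+\frac{1}{2n}\sum_i g(X_i)\widehat\Gamma_i$, and the empirical squared loss is $\frac1n\sum_i\widehat\Gamma_i^2-\frac2n\sum_i g(X_i)\widehat\Gamma_i+1$. The same algebra goes through.

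The only point needing care is integrability. I will assume $\bbE\sqb{\tau(X)^2}<\infty$ and $\bbE\sqb{|\mu_0(X)|}<\infty$ so that all terms are finite and the constants can legitimately be dropped. The rest is routine.
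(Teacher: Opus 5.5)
Your proposal is correct and follows essentially the same route as the paper: set $g_\pi=2\pi-1$, use $g(X)^2=1$ to write the squared loss as $\bbE[\tau(X)^2]+1-2\bbE[g(X)\tau(X)]$, and observe that both problems reduce to maximizing $\bbE[g(X)\tau(X)]$, with the same algebra for the empirical pseudo-outcome version. Your explicit bijection remark and your finite-second-moment assumption on $\tau$ are additions the paper leaves implicit; without that assumption the squared loss is infinite for every $g$ and the equivalence degenerates.
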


\begin{proof}
Every $g\in\mathcal G_\Pi$ satisfies $g(X)^2=1$. Therefore,
\[
\bbE\sqb{(\tau(X)-g(X))^2}
=
\bbE\sqb{\tau(X)^2}+1-2\bbE\sqb{\tau(X)g(X)}.
\]
The first two terms do not depend on $g$. Minimizing the squared loss is equivalent to maximizing $\bbE\sqb{\tau(X)g(X)}$, which is equivalent to maximizing $V(\pi)$. The empirical statement follows in the same way. This is the equivalence studied by \citet{Kato2025bridginggap}.
\end{proof}

\subsection{IPW and AIPW policy values}
Suppose that binary actions satisfy conditional exchangeability and positivity, and write $e(x)=\Pr(A=1\mid X=x)$. For a deterministic policy $\pi$, the IPW value estimator is
\[
\widehat V_{\mathrm{IPW}}(\pi)
=
\frac{1}{n}\sum_{i=1}^n
\left(
\frac{A_iY_i\pi(X_i)}{\widehat e(X_i)}
+
\frac{(1-A_i)Y_i(1-\pi(X_i))}{1-\widehat e(X_i)}
\right).
\]
With the true propensity score, iterated expectations give $\bbE\sqb{\widehat V_{\mathrm{IPW}}(\pi)}=V(\pi)$. Let $\widehat\mu_a(x)$ be expected-outcome estimates. The AIPW value estimator is
\begin{align*}
\widehat V_{\mathrm{AIPW}}(\pi)
=\frac{1}{n}\sum_{i=1}^n
\Bigg(&\widehat\mu_{\pi(X_i)}(X_i)\\
&+
\frac{\mathbbm{1}\sqb{A_i=\pi(X_i)}}{\widehat e(A_i\mid X_i)}
\p{Y_i-\widehat\mu_{A_i}(X_i)}
\Bigg),
\end{align*}
where $\widehat e(1\mid x)=\widehat e(x)$ and $\widehat e(0\mid x)=1-\widehat e(x)$. The second term uses the observed outcome to correct the expected-outcome estimate. This is the statistical difference discussed in
Section~\ref{sec:discussion}.

\section{Nonparametric Rates for Transformer Expected-Outcome Estimates}
\label{app:transformer-rates}
The regret theorems take a prediction-error bound as an input. This section records the substitution for the transformer results cited in the main text and keeps the representation dimension, in-context sample size, and number of pretraining tasks separate.

The analysis of \citet{Kim2024transformersare} considers a transformer composed of a deep neural network with $N_{\mathrm{rep}}$-dimensional output and one linear-attention layer. In the Besov setting, one of its main risk bounds has the form
\[
q_{\mathrm{ICL}}
\lesssim
N_{\mathrm{rep}}^{-2\alpha/d}
+
\frac{N_{\mathrm{rep}}\log N_{\mathrm{rep}}}{n_{\mathrm{ctx}}}
+
\frac{N_{\mathrm{rep}}^2\log N_{\mathrm{rep}}}{T_{\mathrm{pre}}}.
\]
The three terms are the approximation error, the in-context generalization error, and the pretraining generalization error. When $T_{\mathrm{pre}}$ is sufficiently large and $N_{\mathrm{rep}}\asymp n_{\mathrm{ctx}}^{d/(2\alpha+d)}$, the first two terms give the minimax MSE rate $n_{\mathrm{ctx}}^{-2\alpha/(2\alpha+d)}$ up to logarithmic factors. \citet{Oko2024pretrainedtransformer} studies adaptation to low-dimensional target functions, while \citet{Ching2026efficientand} gives a transformer construction attaining the H\"older minimax MSE rate and states a separate requirement on the number of pretraining sequences. Related approximation and estimation results are given by \citet{Takakura2023approximationand} and \citet{Havrilla2024understandingscaling}.

Suppose that the two binary expected-outcome estimates satisfy
\[
\bbE\sqb{(\widehat f(a,X)-f_0(a,X))^2}
\leq
q_{a,n},
\qquad
a\in\cb{0,1}.
\]
Then,
\[
\bbE\sqb{(\widehat\tau(X)-\tau_0(X))^2}
\leq
2q_{0,n}+2q_{1,n}.
\]
The MSE statement of Theorem~\ref{thm:binary-regret} therefore gives
\[
R(\widehat a)
\leq
C(q_{0,n}+q_{1,n})^{(1+\kappa)/(2+\kappa)}.
\]
Substituting the displayed in-context risk yields the corresponding regret bound. If the prediction result is instead a pointwise exponential-deviation inequality with root scale $r_n$, the second statement of Theorem~\ref{thm:binary-regret} gives a bound of order $r_n^{1+\kappa}$. The MSE result alone does not imply that sharper bound.

The sampling model must also match the RAG application. The in-context examples in these analyses are sampled under the task model used for pretraining, while RAG retrieves examples because they are close to the query. A direct application therefore requires the theoretical sampling condition to cover this selected context. Section~\ref{sec:nn-regret} instead analyzes nearest-neighbor retrieval directly. For a general trained generator, the prediction error relative to the required conditional expected outcome must be evaluated or assumed separately.

\section{Details of the Simulation Studies}
\label{app:simulation-designs}

\subsection{Prepared Natural-Language Queries and Case Reports}
All natural-language documents are completed before notebook execution. Each DGP uses 100 manually written source templates. Within every repetition, DGP~1 and DGP~2 use each source pattern five times in the historical corpus, while DGP~3 uses each source pattern six times. Ten distinct source patterns are used for the ten test queries in a repetition.

Each historical record contains the covariates together with the observed action and outcome. Each test record contains covariates and a decision question. The outcome is written as a final performance score to one decimal place. The retrieval profile is stored separately from the prose document. It lists the attributes in a fixed order but excludes the action and outcome as well as the decision question. The numerical state is not included in the text supplied to the language model.

For DGP~1 and DGP~2, the levels in $\cb{-2,-1,0,1,2}$ are expressed as categories from very low to very high. Experience is also expressed in years. For DGP~3, the levels in $\cb{0.1,0.3,0.5,0.7,0.9}$ are expressed as categories from very low to very high. The same completed corpus and test queries are used by every method within a repetition.

\subsection{DGP 1: Linear Conditional Effects with Observational Assignment}
Let $X=(X_1,\ldots,X_6)$, where the components are independently and uniformly distributed on $\cb{-2,-1,0,1,2}$. They represent recent performance; workload; task complexity; experience; schedule flexibility; and team support. The two actions are
\[
0:\text{ temporary workload reduction}
\qquad
1:\text{ weekly individual coaching}.
\]
We define
\[
\begin{aligned}
\mu_0(x)
&=65+4x_1-3x_2-2x_3+2x_4+1.5x_6,\\
\tau_0(x)
&=5x_1-4x_2+3x_5,\\
\mu_1(x)
&=\mu_0(x)+\tau_0(x).
\end{aligned}
\]
The historical probability of action~1 is
\[
e_0(x)
=\operatorname{clip}_{[0.10,0.90]}
\left(
\operatorname{logit}^{-1}
\left(0.5x_2-0.4x_4+0.3x_6-0.2x_1\right)
\right).
\]
We draw $A\mid X=x\sim\operatorname{Bernoulli}(e_0(x))$ and generate
\[
Y=\mu_A(X)+\epsilon,
\qquad
\epsilon\sim\mathcal N(0,3^2).
\]
All variables entering the propensity and the two conditional expected outcomes are present in the text.

\subsection{DGP 2: Smooth Nonlinear Personalization with Randomized Assignment}
DGP~2 uses the same state support and the same two actions. We define
\[
\begin{aligned}
\mu_0(x)
={}&65+3\sin\left(\frac{\pi x_1}{4}\right)-2x_2-1.2x_3+1.4x_4+1.2x_6\\
&+0.5x_2x_6-0.4x_4^2
\end{aligned}
\]
and
\[
\begin{aligned}
b(x)
={}&1.25x_1-1.50x_2+x_3-x_4+0.75x_5-0.75x_6\\
&+0.80\sin\left(\frac{\pi(x_1-x_2)}{4}\right)
+0.60\sin\left(\frac{\pi(x_3-x_4)}{4}\right),\\
\tau_0(x)
={}&8\tanh\left(\frac{b(x)}{4}\right),\\
\mu_1(x)
={}&\mu_0(x)+\tau_0(x).
\end{aligned}
\]
We draw $A\sim\operatorname{Bernoulli}(1/2)$ independently of $X$ and generate
\[
Y=\mu_A(X)+\epsilon,
\qquad
\epsilon\sim\mathcal N(0,2.5^2).
\]
The distribution of $X$ is symmetric and $b(-x)=-b(x)$. It follows that $\bbE[\tau_0(X)]=0$. Both actions are optimal on substantial parts of the state space, while any policy that receives no current state is restricted to a population-level choice. The conditional effect is smooth in all six variables used to construct the retrieval profile.

\subsection{DGP 3: Large Finite Action Space}
Let
\[
\mathcal G=\cb{0.1,0.3,0.5,0.7,0.9}.
\]
The context is $X=(N,I,D,B)$, where $N=(N_1,N_2,N_3,N_4)\in\mathcal G^4$ describes the need for technical skill development; workload relief; managerial guidance; and peer support. The variables $I$, $D$, and $B$ in $\mathcal G$ describe suitable intervention intensity; suitable duration; and baseline performance. All seven variables are drawn independently and uniformly from $\mathcal G$.

An action is a triple $a=(t,\ell,r)$. The intervention type $t$ is individual coaching; temporary workload reduction; technical training; or peer support. The intensity level is $\ell\in\cb{1,2,3}$ and the duration is $r\in\cb{2,4}$ weeks. The four type profiles are
\[
\begin{aligned}
v_{\mathrm{coaching}}&=(0.25,0.10,1.00,0.20),\\
v_{\mathrm{workload}}&=(0.05,1.00,0.20,0.05),\\
v_{\mathrm{training}}&=(1.00,0.05,0.35,0.15),\\
v_{\mathrm{peer}}&=(0.20,0.15,0.45,1.00).
\end{aligned}
\]
The complete action space contains 24 actions. We set
\[
c(a)=0.35\ell+0.15r
\]
and define
\[
f_0(a,x)
=62+3B+\frac{9}{2.2}N^\top v_t
-4\left(I-\frac{\ell}{3}\right)^2
-3\left(D-\frac{r}{4}\right)^2
-c(a).
\]
Historical actions are generated from the misspecified score
\[
s(a,x)=2N^\top v_t-0.7c(a).
\]
The assignment probability is
\[
\Pr(A=a\mid X=x)
=0.20
\frac{\exp(s(a,x))}{\sum_{a'\in\calA}\exp(s(a',x))}
+\frac{0.80}{24}.
\]
The assignment rule omits suitable intensity and duration as well as baseline performance, even though all three enter $f_0$. The realized outcome is
\[
Y=f_0(A,X)+\epsilon,
\qquad
\epsilon\sim\mathcal N(0,2.5^2).
\]
Each action has a stable identifier from \texttt{A01} through \texttt{A24} and a natural-language description.

\subsection{Embedding and Retrieval}
The main experiments use the OpenAI embedding model \texttt{text-embedding-3-small}. Every vector is normalized and cosine distance is used. The embedding input is the fixed-order retrieval profile rather than the full prose report. This profile contains all attributes but excludes the historical action and observed outcome as well as the decision question. The generator receives the full prose query and the full retrieved case reports.

For DGP~1 and DGP~2, the experiment retrieves the six nearest reports observed under action~0 and the six nearest reports observed under action~1. One-step RAG-PL and Two-step RAG-PL receive exactly the same query and reports. Reports are displayed by retrieval rank. The action shown first alternates across ranks, and the initial action is randomized by query. The action descriptions and the two score summaries follow the same randomized order. The prompt states that this order has no priority.

For DGP~3, the initial retrieval returns the twelve nearest reports without restricting the historical action. Both procedures receive these reports and the complete action catalog. Two-step RAG-PL then retrieves five reports within each generated candidate action.

\subsection{Implementation of the RAG Procedures}
The main experiments use the model identifier \texttt{gpt-5.4-mini}. The temperature is set to $0$, and the reasoning effort is set to \texttt{none}. For DGPs~1 and~2, the request configuration allows at most 16,384 input tokens and 256 output tokens. For DGP~3, the corresponding limits are 32,768 input tokens and 1,200 output tokens. Each test query is processed independently, and no conversation history is carried across test queries. All model outputs are validated against predefined structured-output schemas.

In DGPs~1 and~2, every method receives six displayed reports under each action and the exact arithmetic mean of the six displayed outcomes under each action. These summaries are deterministic functions of the displayed reports and therefore add no observations. Providing the summaries to every method ensures that differences among the methods are not driven merely by the model's ability to calculate sample averages from the displayed outcomes.

One-step RAG-PL returns only a selected action. Two-step RAG-PL returns one estimated conditional expected outcome under each action. The experiment code selects action~1 when its estimated conditional expected outcome is greater than or equal to that under action~0. Thus, action~1 is selected under an exact tie.

RAG without covariates receives neither a current query profile nor historical pre-action profiles. Its historical reports contain only the action and observed outcome. Within each repetition, six reports are sampled uniformly without replacement under each action. The model receives the exact arithmetic mean of the six displayed outcomes under each action and returns two population-level expected-outcome estimates. The experiment code applies the same tie-breaking rule as above. Because the input to this baseline does not vary across the test queries within a repetition, the baseline is queried once for each retrieval database, and its selected action is applied to every test query in that repetition.

In DGP~3, both RAG-PL procedures receive the same current query, the same twelve initially retrieved reports, and the complete catalog of 24 actions. One-step RAG-PL returns one valid action identifier from the catalog. Two-step RAG-PL first returns five distinct valid action identifiers. The procedure then retrieves five reports observed under each candidate action. A subsequent candidate-evaluation request supplies the five candidate actions and their matched reports and asks for one conditional expected-outcome estimate for each candidate. The experiment code selects the candidate with the largest estimate, using a fixed rule to break ties.

During candidate evaluation, valid estimates contained in an incomplete response are retained, and only the missing candidate estimates are requested in subsequent attempts. Candidate generation and candidate evaluation each permit at most eight attempts in total. All experiments reported below obtained complete valid outputs within these limits.

The DGP~3 comparison is not matched in either the number of model requests or the amount of retrieved evidence. In particular, Two-step RAG-PL uses an additional candidate-generation stage, candidate-specific retrieval, and a candidate-evaluation stage. We therefore interpret the DGP~3 results as a comparison of the implemented end-to-end pipelines, rather than as an isolated comparison of one-step and two-step action selection under equal computational budgets.

\subsection{Evaluation}
Each DGP contains 20 independent repetitions and ten test queries per repetition. DGP~1 and DGP~2 contain 500 historical reports per repetition. DGP~3 contains 600. All methods in a repetition use the same completed corpus and queries. All 20 repetitions were completed for every method and were included in the main summaries.

For each test query, we record the selected and optimal actions. We also record their true values and the resulting regret. An action is counted as optimal whenever its true conditional expected outcome equals the maximum, so ties are handled by value rather than by agreement with one tie-breaking label. Query-level quantities are averaged within a repetition. Means and standard errors are then calculated across repetitions.

For methods that return expected outcomes, we also calculate expected-outcome mean-squared error. In the binary DGPs, we calculate the squared error of the estimated contrast. For DGP~3, we calculate candidate-specific expected-outcome mean-squared error and centered mean-squared error. We also calculate pairwise ranking accuracy.

For DGP~3, let $g(x)$ be the five generated candidates and let $a_g^*(x)$ be the best action in this set. We record
\[
f_0(a^*(x),x)-f_0(a_g^*(x),x)
\]
and
\[
f_0(a_g^*(x),x)-f_0(\widehat a_{\mathrm{two}}(x),x).
\]
Their sum equals the query-level total regret. We also record exact candidate coverage. Approximate coverage is recorded at value-gap thresholds of $0.1$, $0.25$, and $0.5$.

\subsection{Language-Model Robustness}
After all three GPT-5.4~mini experiments are completed, DGP~1 is repeated with Qwen~3.5~4B and Gemma~3~4B. The local models use the same completed documents and OpenAI retrieval vectors as the main DGP~1 experiment. They also use the same matched reports and method definitions. They are used only as RAG generators. Results are reported separately by model because the purpose is to check the direction of the method comparison rather than rank language models.

\subsection{Additional Results}
\label{app:simulation-results}
Table~\ref{tab:binary-estimation} reports the expected-outcome errors for procedures that return numerical estimates in the binary DGPs. Two-step RAG-PL has lower expected-outcome and contrast mean-squared error than RAG without covariates in both DGPs.

\begin{table}[t]
\centering
\caption{Expected-outcome estimation in the binary DGPs}
\label{tab:binary-estimation}
\small
\begin{tabular}{lrr}
\toprule
Method & Expected-outcome MSE & Contrast MSE \\
\midrule
\multicolumn{3}{l}{\textbf{DGP 1}} \\
Two-step RAG-PL & 111.4406 & 161.4745 \\
RAG without covariates & 229.4167 & 251.6467 \\
\midrule
\multicolumn{3}{l}{\textbf{DGP 2}} \\
Two-step RAG-PL & 28.5777 & 31.1131 \\
RAG without covariates & 51.9164 & 41.3173 \\
\bottomrule
\end{tabular}
\end{table}

Table~\ref{tab:dgp3-components} separates the regret of Two-step RAG-PL in DGP~3. Candidate generation accounts for mean regret $0.4822$, while selection within the generated set accounts for $0.7332$. Pairwise ranking accuracy within the candidate set is $0.6200$, and the best candidate is selected with probability $0.3450$.

\begin{table}[t]
\centering
\caption{Regret decomposition and candidate evaluation in DGP 3}
\label{tab:dgp3-components}
\small
\begin{tabular}{lrr}
\toprule
Quantity & Mean & Standard error \\
\midrule
Total regret & 1.2154 & 0.0837 \\
Candidate-set regret & 0.4822 & 0.0387 \\
Within-candidate regret & 0.7332 & 0.0616 \\
Exact candidate coverage & 0.4450 & 0.0294 \\
Candidate coverage within $0.1$ & 0.4500 & 0.0286 \\
Candidate coverage within $0.25$ & 0.5200 & 0.0258 \\
Candidate coverage within $0.5$ & 0.6700 & 0.0309 \\
Candidate expected-outcome MSE & 3.0358 & 0.1412 \\
Centered candidate outcome MSE & 1.9420 & 0.1096 \\
Pairwise ranking accuracy & 0.6200 & 0.0153 \\
Best-candidate selection probability & 0.3450 & 0.0359 \\
\bottomrule
\end{tabular}
\end{table}

Table~\ref{tab:local-model-robustness} gives the DGP~1 results for the two local generators. For both models, One-step RAG-PL and Two-step RAG-PL have lower mean regret than RAG without covariates. The relative ordering of the one-step and two-step procedures varies with the generator, so these results are used as a robustness check rather than a model ranking.

\begin{table}[t]
\centering
\caption{DGP 1 robustness results with local language models}
\label{tab:local-model-robustness}
\small
\begin{tabular}{lrrrr}
\toprule
Method & Regret & Regret SE & Policy value & Optimal-action probability \\
\midrule
\multicolumn{5}{l}{\textbf{Qwen 3.5 4B}} \\
One-step RAG-PL & 3.4600 & 0.4353 & 66.6025 & 0.5450 \\
Two-step RAG-PL & 3.6050 & 0.4418 & 66.4575 & 0.5250 \\
RAG without covariates & 4.5450 & 0.3943 & 65.5175 & 0.4100 \\
\midrule
\multicolumn{5}{l}{\textbf{Gemma 3 4B}} \\
One-step RAG-PL & 3.1450 & 0.4055 & 66.9175 & 0.5700 \\
Two-step RAG-PL & 3.5200 & 0.4070 & 66.5425 & 0.5450 \\
RAG without covariates & 4.5450 & 0.3943 & 65.5175 & 0.4100 \\
\bottomrule
\end{tabular}
\end{table}

\end{document}